\documentclass[letterpaper]{article} 
\usepackage[preprint]{aaai2027}  
\usepackage[hyphens]{url}  
\usepackage{graphicx} 
\usepackage{natbib}  
\usepackage{caption} 
\usepackage{amsmath}
\usepackage{amssymb}
\usepackage{mathtools}
\usepackage{amsthm}
\usepackage{etoolbox}
\usepackage{xparse}
\usepackage{booktabs} 
\usepackage{tabularx} 
\usepackage{enumitem}
\usepackage{microtype}

\usepackage{float}
\floatstyle{ruled}
\newfloat{algorithm}{t}{loa}
\floatname{algorithm}{Algorithm}
\newlist{algsteps}{enumerate}{1}
\setlist[algsteps]{label=\footnotesize\arabic*:,wide=0pt,labelsep=0.4em,
  itemsep=1pt,topsep=1pt,parsep=0pt}

\usepackage[disable,textsize=tiny]{todonotes}

\newenvironment{tightalign*}{%
  \setlength{\abovedisplayskip}{0pt}%
  \setlength{\belowdisplayskip}{0pt}%
  \setlength{\abovedisplayshortskip}{0pt}%
  \setlength{\belowdisplayshortskip}{0pt}%
  \csname align*\endcsname
}{\csname endalign*\endcsname}

\theoremstyle{plain}
\newtheorem{theorem}{Theorem}[section]
\newtheorem{proposition}[theorem]{Proposition}

\theoremstyle{definition}
\newtheorem{definition}[theorem]{Definition}

\theoremstyle{remark}

\title{Identifiability-Aware Source Apportionment in City-Scale Advection-Diffusion Systems}

\author{
    Ankit Bhardwaj,
    Lakshmi Subramanian
}
\affiliations{
    Department of Computer Science, New York University\\
    bhardwaj.ankit@nyu.edu
}

\begin{document}

\maketitle

\begin{abstract}
Source apportionment from sparse urban air-quality sensors is an inverse problem confounded by sensor placement, wind-driven transport, background variation, and noise.  Emission inventories restrict the source field to a few candidate groups, but a good predictive fit does not mean those groups can be told apart from the data.  Standard receptor models report confident attributions even when they cannot.
We cast inventory-based apportionment as a wind-conditioned lagged inverse problem in which each source and temporal-basis coefficient leaves a distinct sensor-time fingerprint.  We make identifiability a reported output.  After projecting out a low-dimensional background, separability is governed by the projected response matrix \(\widetilde H_\Phi\).  Exact recovery needs full column rank, and noise-robust attribution a margin in its singular values, coefficient visibility, background absorption, and pairwise coherence.  Our identifiability-aware apportionment (IASA) framework fits nonnegative coefficients, reconstructs activity trajectories, quantifies uncertainty, and, crucially, diagnoses indistinguishable sources and falls back to conservative groupings.  Across controlled degeneracies (wind-geometry collapse, source-like backgrounds) IASA flags every non-identifiable case and recovers apportionment to within \(0.01\) share error.  NNLS, CMB, and PMF baselines detect none and misattribute by up to \(1.2\).  On a New Delhi platform (government PM\(_{2.5}\) and wind records, regulatory sensor locations, four proxy source groups), IASA reports the finest attribution the declared inventories, transport, background, lag, and noise can support, and no more.
\end{abstract}

\section{Introduction}
\label{sec:introduction}

Urban air pollution is driven by local emissions, meteorological transport,
regional background, and sensor noise.  For policy the central question is often
not only \emph{where} pollution is high but \emph{which} source groups (traffic,
industry, brick kilns, population-related activity) produced it, given source
inventories, transport models, and sparse measurements.

This is especially hard in sparse sensor networks: a model may fit sensor
trajectories accurately yet fail to identify the underlying contributions.  The
reason is geometric.  After wind transport and sparse observation, two distinct
source groups can induce nearly identical sensor-time signatures.  A strong
distant source and a weaker nearby one become observationally equivalent after
attenuation, travel time, background correction, and noise.
Fine-grained attribution is then not merely uncertain; it is not identifiable
from the sensing system.

We study this in the inventory-based setting.  Rather than recovering an arbitrary
source field, we assume a finite set of known or proxy source maps that restrict
attribution to a low-dimensional, interpretable model.  Each source's
time variation is represented as a nonnegative combination of a few temporal basis
functions.  The unknowns are then source--basis coefficients, from which activity
trajectories and contributions are reconstructed.  These restrictions alone do not
make the problem identifiable: coefficient fingerprints can still be weakly
visible, absorbed by background components, or mutually coherent after transport.

The key object is a projected lagged source-response matrix.  For a fixed wind
realization, transport operator, inventory, sensor layout, and lag window, each
source--basis pair induces a finite-horizon sensor-space fingerprint; stacking
them gives \(H_{\Phi}^{\mathrm{lag}}\).  After projecting out a low-dimensional
background basis \(Q\), the inverse problem is governed by
\(\widetilde H_\Phi = P_Q^\perp H_{\Phi}^{\mathrm{lag}}\).  Its rank, singular
values, coefficient visibility, background absorption, pairwise coherence, and ray
distance determine which contributions can be separated and which should be
reported only after merging or qualification.  The goal is therefore not the most
detailed contribution vector the inventory permits.  It is a conservative
resolution, one defensible under the available sensors, wind, background, and
noise, and conditional on the declared source model.  Estimates are reported
alongside identifiability diagnostics and ambiguity certificates.

We make the following contributions:
\begin{itemize}[leftmargin=*,itemsep=0pt,topsep=0pt]
    \item We formulate sparse-sensor, inventory-based source apportionment as a
    lagged wind-conditioned inverse problem over nonnegative
    source--temporal-basis coefficients, and identify the projected lagged
    response matrix \(\widetilde H_\Phi=P_Q^\perp H_{\Phi}^{\mathrm{lag}}\) as the
    central object governing coefficient separability after transport, sparse
    observation, lag, and background correction.
    \item We turn exact and noise-robust identifiability into a concrete
    reporting protocol built from computable diagnostics of \(\widetilde H_\Phi\)
    (rank, smallest singular value, condition number, effective rank, coefficient
    visibility, background absorption, pairwise fingerprint coherence, and
    \mbox{ray distance}), keeping what is fitted separate from what is diagnosed
    so that predictive accuracy cannot be mistaken for identifiability.
    \item We propose an identifiability-aware apportionment (IASA) procedure that
    fits nonnegative source--basis coefficients, reconstructs source activity
    trajectories, reports uncertainty, flags weakly visible coefficients,
    recommends connected report groups when eligible fingerprints are
    indistinguishable, and adds a per-sensor spatial attribution that back-traces
    the transport response to upwind regions.
    \item We instantiate the framework on a New Delhi platform (government
    PM\(_{2.5}\) and wind records, regulatory sensor locations, kernel
    coordinate-query gridded wind imputation, and four normalized proxy source
    groups with declared temporal activity bases) and define controlled and
    observed-data evaluations of conditioning, source ambiguity, background
    stress, wind diversity, temporal-basis recovery, response error, uncertainty,
    and residual structure. Across controlled degeneracies, IASA flags every non-identifiable case and recovers apportionment to within \(0.01\) share error, outperforming the baselines: NNLS, CMB, and PMF.
\end{itemize}

\section{Related Work}
\label{sec:related_work}

Source apportionment has a long history in air-quality science, spanning
receptor-oriented methods (chemical mass balance and factor-analytic approaches
such as positive matrix factorization
\citep{watson2002receptor,paatero1994positive,hopke2016review}) and
source-oriented models that propagate emissions inventories through transport,
with reviews and harmonization efforts documenting their practical sensitivity to
source profiles, tracer collinearity, factor interpretation, and inventory
assumptions
\citep{viana2008source,belis2013critical,belis2019european,mircea2020european,hopke2020global}.
This ambiguity is acute in settings such as India, where the same city yields
divergent source-contribution estimates and heterogeneously reported categories
\citep{pant2012critical,karagulian2015contributions}.
Estimating latent quantities from partial observations belongs to inverse
problems and data assimilation
\citep{tarantola2005inverse,engl1996regularization,wang2023kalman,carrassi2018data}
and, more recently, physics-informed and operator-learning methods
\citep{raissi2019physics,li2024physics,li2020fourier,bhardwaj2025fieldformer}.
These motivate our transport-structured constrained backend but do not themselves
resolve identifiability in a many-to-one sparse-sensing map.
Graph, transformer, and sequence models address the same sparse-sensor regime
\citep{li2017diffusion,yu2017spatio,wu2019graph,chen2023group,iyer2022modeling,bhardwaj2025comprehensive},
but target forecasting, interpolation, or representation learning rather than
attribution to physically meaningful inventories.  Finally, our separability
question is a form of the observability studied in control theory and system
identification \citep{chen1984linear,ljung1987theory,banks2012estimation,colton1990inverse}.
Unlike this prior work, we analyze when declared or proxy source groups are
\emph{distinguishable} from sparse concentration sensors after wind-conditioned
transport, finite lags, background correction, and noise, and report the
attribution resolution the sensing system supports.  See Appendix~\ref{app:extended_related_work}
for details.

\section{Problem Setup}
\label{sec:problem_setup}

We formalize source apportionment as a finite-dimensional inverse problem in which
known source inventories are transported to a sparse sensor network under a
wind-conditioned lagged operator.  This defines the observation model, inventory
parameterization, response construction, background correction, and the projected
response matrix \(\widetilde H_\Phi\) used throughout.  Transport, background, and fit
details are deferred to Appendix~\ref{app:transport_response}.

\noindent\textbf{City-scale transport model.}
Let \(u(\mathbf x,t)\) be the pollutant concentration field on
\(\Omega\subset \mathbb R^d\), evolving as \(\partial u/\partial t
= \mathcal F_\eta(u,\mathbf x,t;w(t)) + s(\mathbf x,t)+b(\mathbf x,t)\) under
meteorological state \(w(t)\), local emission \(s\), background \(b\), and
transport/dispersion parameters \(\eta\); the dynamics may be realized by a plume
model, an advection--diffusion solver, or a transport-constrained surrogate.

\noindent\textbf{Sparse observations.}
Discretize the concentration field on \(n\) grid cells, so that
\(\mathbf u_t\in \mathbb R^n\).  Let
\(X_{\mathrm{sens}}=\{\mathbf x_1,\ldots,\mathbf x_m\}\) be the fixed sensor
locations with \(m\ll n\).  The observation operator
\(O:\mathbb R^n\to \mathbb R^m\) samples or interpolates the latent field at
these locations, \(\mathbf y_t = O\mathbf u_t + \boldsymbol\epsilon_t\); stacking
over \(t=1,\ldots,T\) gives
\(Y = [\mathbf y_1^\top,\ldots,\mathbf y_T^\top]^\top \in \mathbb R^{mT}\).
Because \(m\) is small relative to the field dimension, many latent fields and
source configurations fit the same sensor~trajectory.

\noindent\textbf{Inventory-based source parameterization.}
We assume a set of \(K\) known or proxy source maps,
\(S = [\mathbf s_1\;\cdots\;\mathbf s_K]\in \mathbb R^{q\times K}\),
where \(\mathbf s_k\in\mathbb R^q\) is the spatial pattern of source group
\(k\).  Time-varying source activity uses a fixed, predeclared nonnegative
temporal dictionary
\(\Phi=[\boldsymbol\phi_1\;\cdots\;\boldsymbol\phi_B]\in\mathbb R_+^{T\times B}\)
and coefficients \(C=[c_{kb}]\in\mathbb R_+^{K\times B}\), so the activity of
source \(k\) is \(\theta_k(t)=\sum_{b=1}^{B}c_{kb}\phi_b(t)\ge0\).  The basis is a
shared dictionary of admissible temporal patterns, not a claim that all sources
follow one trajectory: each source has its own nonnegative weights, and
source-specific knowledge is imposed by masking inadmissible coefficients.  We do
not learn \(\Phi\) from the same sparse observations, since jointly estimating
\(\Phi\) and \(C\) would add bilinear non-identifiability beyond the source
distinguishability analyzed here.  We vectorize \(C\) in source-major, basis-minor order as
\(\mathbf c=\operatorname{vec}_{\mathrm{src}}(C)\in\mathbb R_+^J\), \(J=KB\); the
constant-activity model is the special case \(B=1\), \(\phi_1(t)=1\).

\noindent\textbf{Wind-conditioned lagged response.}\label{subsec:wind_field_estimation}
Government wind direction records where wind comes from, whereas transport needs
the direction pollution moves; we convert direction \(\mathit{WD}\) and speed
\(\mathit{WS}\) to transport vectors
\begin{equation}
(U_x,V_y)=-\mathit{WS}\,(\sin\alpha,\cos\alpha),\;\;
\alpha=\mathit{WD}\,\pi/180,
\label{eq:wind_direction_conversion}
\end{equation}
and complete them on the response grid with a normalized Gaussian-kernel
coordinate-query imputer to obtain a gridded transport field
\(\widehat{\mathbf w}_t(\mathbf x_g)\); and the same interface
supplies synthetic controlled winds and transport ensembles kept separate from
inventory scenarios.  Emissions released at \(t-\ell\) affect sensors at \(t\)
only after transport by the intervening wind.  Let
\(G_{t,t-\ell}^{\partial}(w_{t-\ell:t})\in \mathbb R^{n\times q}\) map emissions
released on the source grid at \(t-\ell\) to the concentration grid at \(t\); the
superscript \(\partial\) denotes an open-boundary operator (mass leaving
\(\Omega\) is removed, not reflected or wrapped).\label{subsec:plume_response} It
is realized by a Gaussian puff approximation evaluated at sensor coordinates that
preserves nonnegativity and is mass non-increasing inside the domain, with
dispersion, exit bookkeeping, and the pre-fit initial-condition baseline given in
Appendix~\ref{app:transport_response}.  For a maximum lag \(L\),
\(\mathcal L_t=\{0,\ldots,\min(L,t-1)\}\), the unit-coefficient sensor fingerprint
of source \(k\) and basis component \(b\) at time \(t\) is
\begin{equation}
\mathbf h_{t,kb}^{\mathrm{lag}}
= \sum_{\ell\in\mathcal L_t}
\phi_b(t-\ell)\,O G_{t,t-\ell}^{\partial}(w_{t-\ell:t})\mathbf s_k
\in \mathbb R^m.
\label{eq:constructed_response}
\end{equation}
Stacking over time in source-major, basis-minor order gives
\(H_{\Phi}^{\mathrm{lag}}\in \mathbb R^{mT\times J}\), \(J=KB\), each column the
finite-horizon sensor-time pattern produced by one unit of a coefficient.  The lag
window adds no sensor--time rows and is chosen before fitting from physical travel-
and residence-time bounds: for adjacent candidates \(L\), \(L+\Delta\) with the
same rows and columns,
\begin{equation}
\eta_L=
\frac{\|H_{\Phi}^{\mathrm{lag}}(L+\Delta)
-H_{\Phi}^{\mathrm{lag}}(L)\|_F}
{\max\{\|H_{\Phi}^{\mathrm{lag}}(L+\Delta)\|_F,\epsilon\}},
\label{eq:lag_convergence}
\end{equation}
and the primary lag is the smallest candidate with \(\eta_L\le\tau_L\) (default
\(10^{-3}\)); the fitted \(\mathbf c\) never selects \(L\).

\noindent\textbf{Background correction and projection.}\label{subsec:background_construction}
Observed concentrations include components not explained by the local inventory
(regional background, smooth trends, sensor offsets), giving
\(Y = H_{\Phi}^{\mathrm{lag}}\mathbf c + Q\boldsymbol\gamma + E\) with a
low-dimensional basis \(Q\in \mathbb R^{mT\times r}\) specified before fitting from
source-independent metadata only (timestamps, day labels, sensor identity, and
coordinates; effective rank capped at eight, the rank-four for New Delhi observations),
so apportionment is not confounded by background correction; columns built from
inventories, fingerprints, or fitted signals are excluded and permitted only in
labeled stress tests (Appendix~\ref{app:transport_response}).  Real records need
not be complete: let \(\mathcal O\) be the ordered observed sensor--time
PM\(_{2.5}\) rows and \(M_{\mathcal O}\in\{0,1\}^{N\times mT}\) select them, applied
identically to every row-aligned object,
\begin{equation}
Y_{\mathcal O}=M_{\mathcal O}Y,\quad
H_{\Phi,\mathcal O}^{\mathrm{lag}}
=M_{\mathcal O}H_{\Phi}^{\mathrm{lag}},\quad
Q_{\mathcal O}=M_{\mathcal O}Q,
\label{eq:observation_selection}
\end{equation}
with an alignment mismatch treated as an error;
wind may be imputed, PM\(_{2.5}\) is not, and \(M_{\mathcal O}=I\) for complete
controlled data.  Dropping the \(\mathcal O\) subscript, let
\(P_Q^\perp = I - QQ^\dagger\) project onto the orthogonal complement of the
background space (applied implicitly through a thin SVD of \(Q\)); the corrected
inverse problem is
\begin{equation}
\widetilde Y = \widetilde H_\Phi\mathbf c + \widetilde E,
\qquad
\widetilde Y=P_Q^\perp Y,
\qquad
\widetilde H_\Phi=P_Q^\perp H_{\Phi}^{\mathrm{lag}},
\label{eq:projected_model}
\end{equation}
so each projected fingerprint
\(\widetilde{\mathbf h}_{kb}=P_Q^\perp\mathbf h_{kb}^{\mathrm{lag}}\) encodes the
observable effect of one source--basis coefficient after transport, lag, sparse
sensing, and background correction.  All identifiability diagnostics in this paper
are computed from \(\widetilde H_\Phi\), not the raw inventory or the
\mbox{unprojected response}.

\noindent\textbf{Source activity fit.}
Scientific knowledge may declare a set \(\mathcal F_0\) of coefficients identically
zero before the response is diagnosed or fitted; with
\(\mathcal I=\{1,\ldots,J\}\setminus\mathcal F_0\), the fit is computed on
\(\widetilde H_{\Phi,\mathcal I}\) and zeros are restored afterward (default
\(\mathcal F_0=\varnothing\)).  Coefficients are never deleted after fitting merely
for being small, which would change the diagnosed problem after seeing the data.
Source--basis coefficients are estimated by nonnegative regularized least squares
\begin{equation}
\widehat{\mathbf c}
=\arg\min_{\mathbf c\ge 0}
\|\widetilde Y-\widetilde H_\Phi\mathbf c\|_2^2
+\lambda R(\mathbf c),
\label{eq:nnls_estimator}
\end{equation}
with \(R\) a ridge or weak inventory-prior penalty, reshaped into activity
trajectories \(\widehat\theta_k(t)=\sum_{b}\widehat c_{kb}\phi_b(t)\).  The
regularizer choices, an equivalent joint source/background fit, and an optional
constrained end-to-end refinement of wind, dispersion, and coefficients, accepted
only if it does not degrade identifiability, are given in
Appendix~\ref{app:transport_response}.

\section{Identifiability-Aware Source Apportionment}
\label{sec:identifiability_aware_method}
\label{sec:identifiability_theory}

We now characterize when source activity at a chosen temporal-basis resolution
is determined by the projected observations.  Throughout,
\(\widetilde Y=\widetilde H_\Phi\mathbf c+\widetilde E\) with
\(\widetilde H_\Phi=P_Q^\perp H_{\Phi}^{\mathrm{lag}}\in\mathbb R^{N\times J}\),
\(J=KB\), conditional on the constructed wind field, transport operator,
inventories, temporal basis, lag window, and background basis.  We use the
flattened coefficient index \(j=(k,b)\).

\begin{definition}[Exact identifiability]
The source--basis coefficient vector is identifiable from \(\widetilde Y\) if
\(\widetilde H_\Phi\mathbf c_1=\widetilde H_\Phi\mathbf c_2\Longrightarrow
\mathbf c_1=\mathbf c_2\); this is identifiability of the reconstructed source
activity trajectories at the selected temporal-basis resolution.
\end{definition}

\begin{proposition}[Identifiability of inventory-based apportionment]
\label{prop:rank_identifiability}
In the noiseless projected model \(\widetilde Y=\widetilde H_\Phi\mathbf c\),
\(\mathbf c\) is identifiable uniformly for all
\(\mathbf c\in\mathbb R_+^J\) if and only if
\begin{equation}
\operatorname{rank}(\widetilde H_\Phi)=J.
\label{eq:rank_condition}
\end{equation}
\end{proposition}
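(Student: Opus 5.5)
The plan is to reduce identifiability to injectivity of the linear map \(\mathbf c\mapsto\widetilde H_\Phi\mathbf c\) restricted to the cone \(\mathbb R_+^J\). The one subtlety is that the statement quantifies only over nonnegative coefficients, so the null space has to be reached from differences of nonnegative vectors.

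\emph{Sufficiency.} I would argue directly. Suppose \(\operatorname{rank}(\widetilde H_\Phi)=J\), and take \(\mathbf c_1,\mathbf c_2\in\mathbb R_+^J\) with \(\widetilde H_\Phi\mathbf c_1=\widetilde H_\Phi\mathbf c_2\). Then \(\widetilde H_\Phi(\mathbf c_1-\mathbf c_2)=0\). By rank--nullity, full column rank means \(\ker\widetilde H_\Phi=\{0\}\), so \(\mathbf c_1=\mathbf c_2\). This direction never uses nonnegativity.

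\emph{Necessity.} I would argue by contrapositive. Suppose \(\operatorname{rank}(\widetilde H_\Phi)<J\), and pick a nonzero \(\mathbf v\in\ker\widetilde H_\Phi\). Split it as \(\mathbf v=\mathbf v^+-\mathbf v^-\) with \(\mathbf v^\pm=\max(\pm\mathbf v,0)\ge0\) taken entrywise. Set \(\mathbf c_1=\mathbf v^+\) and \(\mathbf c_2=\mathbf v^-\). Both lie in \(\mathbb R_+^J\), they are distinct because \(\mathbf v\neq0\), and \(\widetilde H_\Phi\mathbf c_1-\widetilde H_\Phi\mathbf c_2=\widetilde H_\Phi\mathbf v=0\). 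So the two vectors produce the same projected observations, and identifiability fails uniformly over \(\mathbb R_+^J\).

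A variant is to shift by a common strictly positive vector, \(\mathbf c_1=\mathbf c_0+\epsilon\mathbf v\) and \(\mathbf c_2=\mathbf c_0\) with \(\mathbf c_0\) strictly positive and \(\epsilon\) small. This exhibits non-identifiability even at interior points, which I would mention as a remark.

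The main point of care is this necessity step. One must show that a kernel direction, which may have mixed signs, still produces two \emph{admissible} nonnegative coefficient vectors. The positive/negative-part decomposition handles this, since the nonnegative orthant spans \(\mathbb R^J\). I would also note that the result is stated "uniformly for all \(\mathbf c\)". Identifiability at a single boundary point \(\mathbf c\) can hold without full rank, for example when \(\mathbf c=0\) and \(\ker\widetilde H_\Phi\cap\mathbb R_+^J=\{0\}\). The proposition therefore characterizes only the global notion.

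If coefficients are masked by \(\mathcal F_0\), the same argument applies verbatim with \(\widetilde H_{\Phi,\mathcal I}\) and \(J\) replaced by \(|\mathcal I|\).
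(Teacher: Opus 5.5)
Your proof is correct and matches the paper's. Sufficiency is the same rank--nullity argument, and necessity likewise turns a nonzero kernel vector into two distinct nonnegative coefficient vectors with identical projected observations. The only difference is the witness pair: the paper uses the strictly positive pair \(\beta\mathbf 1\) and \(\beta\mathbf 1+\Delta\mathbf c\) with \(\beta>\|\Delta\mathbf c\|_\infty\), which is essentially your ``variant'' and shows failure even in the interior of the orthant, whereas your main choice \(\mathbf v^+,\mathbf v^-\) lies on the boundary but needs no scale parameter.
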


The proof (rank--nullity in both directions, with a nonnegative-feasible
counterexample when the rank is deficient) is in
Appendix~\ref{app:identifiability_proofs}.  We use this full-column-rank
condition as our identifiability criterion for the declared coefficient set.  Coefficients fixed to zero by scientific knowledge are masked
before fitting; the theorem and all diagnostics then apply to the reduced,
predeclared column set, and a near-zero fitted estimate never justifies deleting
its column after fitting.  Rank deficiency may arise from spatial source
ambiguity, dependence among temporal modes, or their interaction.

\noindent\textbf{Noise-Robust Identifiability.}
Let \(\sigma_1(\widetilde H_\Phi)\ge\cdots\ge
\sigma_J(\widetilde H_\Phi)\ge0\) be the singular values, padded with zeros when
\(J>N\) or the matrix is rank deficient.  Even at full rank, small singular
values amplify noise.

\begin{proposition}[Noise-robust apportionment]
\label{prop:noise_robust}
Assume \(\widetilde Y=\widetilde H_\Phi\mathbf c+\widetilde E\) and
\(\operatorname{rank}(\widetilde H_\Phi)=J\).  Let
\(\widehat{\mathbf c}_{\mathrm{LS}}\) be the unconstrained least-squares
estimate.  Then
\(\|\widehat{\mathbf c}_{\mathrm{LS}}-\mathbf c\|_2
\le\|\widetilde E\|_2/\sigma_J(\widetilde H_\Phi)\).
If \(\widehat{\mathbf c}\) is the nonnegative least-squares estimate and
\(\mathbf c\ge0\), then
\begin{equation}
\|\widehat{\mathbf c}-\mathbf c\|_2
\le
\frac{2\|\widetilde E\|_2}{\sigma_J(\widetilde H_\Phi)}.
\label{eq:nnls_noise_bound}
\end{equation}
\end{proposition}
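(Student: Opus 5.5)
Both bounds follow from the lower singular-value inequality \(\|\widetilde H_\Phi\mathbf v\|_2\ge\sigma_J(\widetilde H_\Phi)\|\mathbf v\|_2\) for all \(\mathbf v\in\mathbb R^J\). Under full column rank (Proposition~\ref{prop:rank_identifiability}) we have \(\sigma_J>0\). This inequality follows from the thin SVD \(\widetilde H_\Phi=U\Sigma V^\top\) with \(V\) orthogonal, since then \(\|\widetilde H_\Phi\mathbf v\|_2=\|\Sigma V^\top\mathbf v\|_2\ge\sigma_J\|V^\top\mathbf v\|_2\). The plan is to reduce each estimation error to a bound of the form \(\|\widetilde H_\Phi(\widehat{\mathbf c}-\mathbf c)\|_2\le\alpha\|\widetilde E\|_2\), with \(\alpha=1\) for least squares and \(\alpha=2\) for NNLS, and then divide by \(\sigma_J\).

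\textbf{Unconstrained case.} With full column rank, \(\widehat{\mathbf c}_{\mathrm{LS}}=\widetilde H_\Phi^\dagger\widetilde Y\) and \(\widetilde H_\Phi^\dagger\widetilde H_\Phi=I_J\). Hence \(\widehat{\mathbf c}_{\mathrm{LS}}-\mathbf c=\widetilde H_\Phi^\dagger\widetilde E\). The operator norm of the pseudoinverse is \(\|\widetilde H_\Phi^\dagger\|_2=1/\sigma_J\), which gives the first bound directly. Equivalently, \(\widetilde H_\Phi(\widehat{\mathbf c}_{\mathrm{LS}}-\mathbf c)\) is the orthogonal projection of \(\widetilde E\) onto \(\operatorname{range}(\widetilde H_\Phi)\), so its norm is at most \(\|\widetilde E\|_2\).

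\textbf{NNLS case.} I would use a basic-inequality argument. Since \(\mathbf c\ge0\) is feasible and \(\widehat{\mathbf c}\) minimizes over the nonnegative orthant,
\[
\|\widetilde Y-\widetilde H_\Phi\widehat{\mathbf c}\|_2\le\|\widetilde Y-\widetilde H_\Phi\mathbf c\|_2=\|\widetilde E\|_2 .
\]
Then the triangle inequality gives
\[
\|\widetilde H_\Phi(\widehat{\mathbf c}-\mathbf c)\|_2\le\|\widetilde H_\Phi\widehat{\mathbf c}-\widetilde Y\|_2+\|\widetilde Y-\widetilde H_\Phi\mathbf c\|_2\le2\|\widetilde E\|_2 .
\]
Combining this with the singular-value lower bound yields \eqref{eq:nnls_noise_bound}.

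This route deliberately avoids the KKT/active-set analysis of NNLS, which is where one might otherwise get stuck; the factor \(2\) is exactly the price of skipping it. One could sharpen the constant to \(1\) by expanding the square and using the variational inequality for projection onto the convex cone \(\widetilde H_\Phi\mathbb R_+^J\), but that is not needed. The only point to state carefully is that the bound concerns the plain NNLS estimator, with \(\lambda=0\) in \eqref{eq:nnls_estimator} and no mask. With a mask \(\mathcal F_0\), the same argument applies verbatim to the reduced column set, with \(\sigma_J\) replaced by the smallest singular value of \(\widetilde H_{\Phi,\mathcal I}\). The regularized estimator is not covered.
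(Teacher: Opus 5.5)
Your proof is correct and follows the paper's argument essentially step for step: the unconstrained bound comes from \(\widetilde H_\Phi^\dagger\widetilde H_\Phi=I_J\) and \(\|\widetilde H_\Phi^\dagger\|_2=1/\sigma_J\). The NNLS bound comes from feasibility of \(\mathbf c\), the optimality (basic) inequality, the triangle inequality giving \(\|\widetilde H_\Phi(\widehat{\mathbf c}-\mathbf c)\|_2\le 2\|\widetilde E\|_2\), and the lower singular-value bound. Your side remark is also correct: nonexpansiveness of projection onto the closed convex cone \(\widetilde H_\Phi\mathbb R_+^J\) would sharpen the NNLS constant from \(2\) to \(1\).
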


Both bounds follow from the pseudoinverse identity
\(\|\widetilde H_\Phi^\dagger\|_2=1/\sigma_J(\widetilde H_\Phi)\) together with
feasibility of the true \(\mathbf c\) in the nonnegative program
(Appendix~\ref{app:identifiability_proofs}).  Thus
\(\sigma_J(\widetilde H_\Phi)\) measures worst-case robustness: apportionment can be
exactly identifiable yet practically unstable when it is small.

\noindent\textbf{Identifiability Diagnostics.}
Beyond exact rank, IASA reports a fixed panel of diagnostics computed from
\(\widetilde H_\Phi\), each with a predeclared, noise- or physics-derived
threshold that is never tuned on source recovery (full definitions in
Appendix~\ref{app:diagnostics_definitions}, threshold values in
Table~\ref{tab:thresholds}, and a summary in Table~\ref{tab:diagnostics}).  The
\emph{padded minimum singular value} \(\sigma_J\) is the primary score: a
rank-deficient response is padded to zero and never reported as stable, and the
condition number is \(\kappa=\sigma_1/\sigma_J\).  The \emph{effective rank}
\(r_{\mathrm{eff}}(\tau_\sigma)\) counts directions resolvable above the
observation-noise level \(\tau_\sigma\); when it is below \(J\) the network
cannot support all \(J\) coefficients at that noise level.  \emph{Coefficient
visibility} \(v_j=\|\widetilde{\mathbf h}_j\|_2\) is the post-projection
fingerprint magnitude, and the weak set \(\mathcal W=\{j:v_j\le\tau_v\}\) flags
coefficients below the minimum detectable signal.  For eligible (nonweak) pairs,
\emph{pairwise coherence}
\(\rho_{ij}=|\widetilde{\mathbf h}_i^\top\widetilde{\mathbf h}_j|/
(\|\widetilde{\mathbf h}_i\|_2\|\widetilde{\mathbf h}_j\|_2)\) measures
near-proportional signatures, giving the ambiguous-pair set
\(\mathcal A=\{(i,j):i,j\notin\mathcal W,\ \rho_{ij}>\tau_\rho\}\); an equivalent
scale-invariant ray distance \(\sqrt{1-\rho_{ij}^2}\) encodes the same geometry.
\emph{Background absorption}
\(a_j=\|P_Q\mathbf h_j^{\mathrm{lag}}\|_2/\|\mathbf h_j^{\mathrm{lag}}\|_2\) is the
source-signal fraction lying in the background space, which is why
identifiability is assessed after projection.  Finally, response error, whether
from transport or inventory, acts like additional observation error: it inflates
the \(\sigma_J\) bound and is most damaging when \(\widetilde H_\Phi\) is already
ill-conditioned, with transport error propagated into uncertainty intervals and
inventory error reported only as named robustness scenarios
(Appendix~\ref{app:operator_error_details}).

\noindent\textbf{Identifiable Source Resolution.}
When coefficient fingerprints are weak or highly coherent, separate source
estimates can mislead.  We define the source ambiguity graph
\(\mathcal E_{\mathrm{src}}
=\{(k,k'):\exists b,b'\ ((k,b),(k',b'))\in\mathcal A\}\); each edge retains the
maximum eligible coefficient coherence, minimum ray distance, and the
source--basis pair attaining the trigger.  Its deterministically ordered
connected components \(\mathcal G=\{G_1,\ldots,G_M\}\) are the recommended report
groups.  Weak coefficients remain separate flags and create no edges; rank
deficiency with no eligible edge produces a global unresolved warning rather than
an invented merge.  These components are conservative reporting units, not the
finest identifiable partition: they can transitively over-merge (edges \(A\)--\(B\)
and \(B\)--\(C\) group all three even when \(A\) and \(C\) are distinguishable),
so the edge table retains both triggering pairs and their diagnostic values to
keep this visible.

The coefficient fit is not silently replaced by a grouped refit.  For reporting,
group activity and fitted sensor contribution are the member sums
\(\widehat\theta_{G_a}(t)=\sum_{k\in G_a}\widehat\theta_k(t)\) and
\(\widehat Y_{G_a}=\sum_{k\in G_a}\sum_b
\mathbf h_{kb}^{\mathrm{lag}}\widehat c_{kb}\), while matrix diagnostics remain
coefficient level.  Table~\ref{tab:diagnostics} in
Appendix~\ref{app:temporal_basis_notation} summarizes the scalar diagnostics and
the set- and graph-valued reporting outputs.

\noindent\textbf{Per-sensor source footprints.}
Policy questions are often also spatial.  Because the response is linear in
\(\mathbf c\), the fitted solution already localizes each monitor's signal
without a new model: the per-sensor footprint is the puff response of
Section~\ref{subsec:plume_response} read backward from the sensor, and
identifiability is inherited rather than created
(\(\sigma_J(\widetilde H_\Phi^{(s)})\le\sigma_J(\widetilde H_\Phi)\), so a single
sensor is never more identifiable than the pooled network).  Footprints are
therefore reported as spatial overlays of the global fit, with per-sensor source
\emph{shares} asserted only at the globally identifiable resolution and
aggregated to a report group wherever the pooled diagnostics require one
(Appendix~\ref{app:footprint_details}).

\noindent\textbf{IASA Algorithm.}
We combine the estimator of
Section~\ref{sec:problem_setup} with the diagnostics developed above into an identifiability-aware source
apportionment procedure (IASA) that estimates source contributions only at the
resolution supported by the projected response matrix.  It takes the sparse
observations and their row mask, sensors \(X_{\mathrm{sens}}\), wind observations
and masks, inventories \(S\in\mathbb R^{q\times K}\), background basis \(Q\),
maximum lag \(L\), temporal basis \(\Phi\in\mathbb R_+^{T\times B}\), and transport
response \(G^\partial\), with distinct thresholds
\(\tau_{\mathrm{num}},\tau_\sigma,\tau_\rho,\tau_v\) (and the refinement cap
\(\tau_\rho^{\mathrm{ref}}\) and wind-drift cap \(\epsilon_w\)) each predeclared or
calibrated from noise or physics and none tuned on source recovery
(Appendix~\ref{app:reporting_details}, Table~\ref{tab:thresholds}); it returns
source--basis coefficients and activities at the identifiable grouping, fitted
trajectories and residuals, uncertainty intervals, diagnostics, low-visibility
flags, and merge recommendations.  Algorithm~\ref{alg:iasa} states the procedure.

\begin{algorithm}[t]
\caption{Identifiability-Aware Source Apportionment}
\label{alg:iasa}
\footnotesize
\begin{algsteps}
  \item \textbf{Predeclare} the primary \(Q\), the physical lag-candidate grid,
  \(\Phi\), the inventory version, and the fixed-zero mask \(\mathcal F_0\),
  before inspecting any fit.
  \item \textbf{Wind:} convert \(\mathit{WD}/\mathit{WS}\) to transport vectors
  (Eq.~\ref{eq:wind_direction_conversion}), preserving masks/observed values for
  audit; query the adopted kernel imputer on each response-grid cell for
  \(\widehat{\mathbf w}_t(\mathbf x_g)\) and convert to grid displacement.
  \item \textbf{Response \& lag:} over unmasked source--basis pairs build
  \(\mathbf h_{t,kb}^{\mathrm{lag}}\) (Eq.~\ref{eq:constructed_response}); select
  the smallest \(L\) with \(\eta_L\le\tau_L\) from geometry alone
  (Eq.~\ref{eq:lag_convergence}), retaining the full lag sweep.
  \item \textbf{Align \& project:} apply the observed-PM\(_{2.5}\) selection
  identically to \(Y,H,Q\) and metadata (Eq.~\ref{eq:observation_selection}) with
  source-major, basis-minor columns; form \(\widetilde Y,\widetilde H_\Phi\)
  (Eq.~\ref{eq:projected_model}).
  \item \textbf{Fit:} solve \(\widehat{\mathbf c}\)
  (Eq.~\ref{eq:nnls_estimator}) and reconstruct
  \(\widehat\theta_k(t)=\sum_b\widehat c_{kb}\phi_b(t)\).
  \item \textbf{Diagnose:} compute \(r_{\mathrm{num}}\), the singular values,
  \(r_{\mathrm{eff}}(\tau_\sigma)\), visibility \(v_j\), absorption \(a_j\),
  coherence \(\rho_{ij}\), and ray distance \(d^{\mathrm{ray}}_{ij}\).
  \item \textbf{Flag:} weak set \(\mathcal W=\{j:v_j\le\tau_v\}\); eligible
  ambiguous pairs \(\mathcal A=\{(i,j):i,j\notin\mathcal W,\rho_{ij}>\tau_\rho\}\).
  \item \textbf{Group:} take deterministic connected components of the source
  ambiguity graph as conservative report groups
  (above); weak coefficients are flagged but create no
  edges, and group activities/contributions are member sums that do not replace
  the coefficient fit.
  \item \textbf{Adequacy:} if an externally calibrated noise model exists, run
  the refitted parametric-bootstrap residual test
  (below); otherwise label the residual summary
  uncalibrated.
  \item \textbf{Report} source estimates, observation/transport uncertainty,
  inventory robustness scenarios, residual adequacy, diagnostics, and
  conservative source groups.
\end{algsteps}
\end{algorithm}

The step order enforces the method's central discipline of separating fit from
diagnosis: all model choices, thresholds, and the fixed-zero mask are predeclared
(line~1), the response is built, projected, and fitted (lines~2--5), and only
afterwards are the identifiability diagnostics computed and conservative groups
reported (lines~6--8).  Because no threshold or column set is chosen after
inspecting the fitted coefficients (the lag itself is selected from response-matrix
geometry, not from \(\widehat{\mathbf c}\)), the reported identifiability cannot be
made artificially favorable by post-fit tuning.  The objective in
Equation~\ref{eq:nnls_estimator} is a smooth convex quadratic under a simple
nonnegativity constraint, solved with projected FISTA \citep{beck2009fast}
(Appendix~\ref{app:pytorch_solver}).

\noindent\textbf{Model Adequacy, Uncertainty, and Reporting.}
IASA attaches uncertainty to reported contributions from an active-set/ridge
covariance of the fitted coefficients and, when transport ensembles are supplied,
from empirical quantiles across ensemble refits.  Transport uncertainty and
inventory-robustness scenarios are never pooled into one interval.  Residual
magnitude alone is not calibrated evidence of misspecification.  When an externally
calibrated observation-noise model is available, IASA computes a
background-orthogonal residual statistic whose null distribution (not a plain
\(\chi^2\), since \(\widehat{\mathbf c}\) is fitted from the same data) is
calibrated by a parametric bootstrap (\(1000\) refits, \(\alpha=0.05\)).  Otherwise
it reports raw and projected residual summaries with
\texttt{calibration\_status=uncalibrated} and emits no pass/fail claim.  The test
is one-sided.  A rejection establishes a mismatch with the declared model but not
its cause, whereas a non-rejection does not establish inventory completeness, since
an omitted source whose signature lies in
\(\operatorname{span}[H_\Phi^{\mathrm{lag}},Q]\) is absorbed by fitted terms and
stays residual-invisible.  Prospective network adequacy over historical or
simulated wind windows, per-group reliability labeling, the per-sensor
contribution decomposition and footprints described above, and acceptance criteria for the
optional refinement are detailed in Appendix~\ref{app:reporting_details}.

\section{Evaluation}
\label{sec:evaluation}

We evaluate on a single New Delhi platform so the controlled and observed studies
share one regulatory sensor geometry, one PM\(_{2.5}\) and wind record, and one set
of proxy inventories.  We ask whether the geometry of \(\widetilde H_\Phi\) predicts
which coefficient, activity, and grouped-source claims are defensible.  A shared
base configuration is varied along one controlled axis at a time, avoiding
conflation of wind, source geometry, background flexibility, noise, and
forward-model error.  All runs use the \(40\times40\)
resolution, with a 72-hour transport window for the controlled experiments and
four consecutive one-week windows for the observed study.  Real imputed wind is a
controlled transport input when coefficients are synthetically known; the
observed-PM\(_{2.5}\) mode is a separate evaluation and is never assigned synthetic
source-recovery metrics.

\noindent\textbf{New Delhi Experimental Platform.}
The platform is built from hourly government PM\(_{2.5}\) and wind records
\citep{cpcb,cpcb_portal} spanning 1 May 2018--31 October 2020 over a 32-sensor
regulatory network.  PM\(_{2.5}\) is never imputed, and its valid-value mask defines
\(M_{\mathcal O}\) in Equation~\ref{eq:observation_selection}, with the same
ordered sensor--time rows retained in \(Y\), \(H_\Phi^{\mathrm{lag}}\), \(Q\), and
metadata.  Observed wind is converted (Equation~\ref{eq:wind_direction_conversion})
and completed on the response grid by the kernel coordinate-query imputer of
Section~\ref{subsec:wind_field_estimation}\label{subsec:new_delhi_wind}.  A learned
imputer was trained but not adopted (Section~\ref{subsec:results_wind_imputation}).
Four source groups (brick kilns, industries, population density, traffic) each
carry one normalized spatial proxy map and its own admissible temporal-basis
components (Table~\ref{tab:new_delhi_inventories},
Appendix~\ref{app:platform_details}), so fitted coefficients are in normalized
proxy units, not physical emission shares.  Both forward paths use the
open-boundary Gaussian puff operator of Section~\ref{subsec:plume_response}
evaluated at sensors, the matched generator for controlled recovery and the
inference operator for observed PM\(_{2.5}\).  An auxiliary advection--diffusion
simulator supplies a structural forward-model mismatch for the adequacy study
(Section~\ref{sec:identifiability_aware_method}), and normal backgrounds follow
Section~\ref{subsec:background_construction}.

The \emph{controlled} mode preserves this sensor geometry and transport while
assigning known nonnegative source--basis coefficients (synthetic or real gridded
wind, controlled observation noise, source splits, sensor-layout variants, and
optional operator mismatch), so it supports recovery and interval-coverage
metrics.  The \emph{observed} mode combines real PM\(_{2.5}\), its observation
mask, the imputed wind field, and the named normalized inventories; it has no
ground-truth source activities and is run over four consecutive one-week windows,
each fitted and reported independently.  Column-level records, Pusa-monitor
averaging and coverage, proxy normalization and crop, temporal-activity
assumptions, forward-model baselines, and simulator settings are given in
Appendix~\ref{app:platform_details}.\\

\noindent\textbf{Metrics and Reporting.}
Controlled trials, having known coefficients, report coefficient and
reconstructed-activity error, grouped activity/contribution error, projected
residual, and uncertainty-interval coverage.  All trials also report the singular
spectrum and its diagnostics (\(\sigma_J\), numerical and effective rank, condition
number, visibility, background absorption, eligible coherence and ray distance), the
weak set, triggering pairs, and deterministic report groups
(Section~\ref{sec:identifiability_theory}).  They also record lag convergence, the
fixed-zero mask, background specification, and ensemble provenance (transport and
inventory kept separate).  Observed New Delhi runs instead report raw and
projected residuals, fitted trajectories, normalized proxy coefficients and
contributions, uncertainty intervals, inventory-scenario robustness,
weak/ambiguous flags, conservative grouped contributions, and residual calibration
status.  A reported percentage is a fraction of the fitted inventory-attributed
sensor signal, not a physical-emission share.

\subsection{Controlled Results}
\label{subsec:controlled_results}

The controlled trials have known source--basis coefficients and therefore recovery
and coverage metrics.  The controlled geometries are deliberately small (mostly
\(K=2\) source groups) so that conditioning, coherence, and background absorption
can each be varied in isolation and read directly off \(\widetilde H_\Phi\).  The
observed study then exercises the full four-group, seven-coefficient pipeline on
real data.  Figure~\ref{fig:controlled} collects the baseline comparison and the
seven controlled diagnostics.  Full numeric tables are in
Appendix~\ref{app:expanded_results}.

\begin{figure*}[t]
\centering
\includegraphics[width=\textwidth]{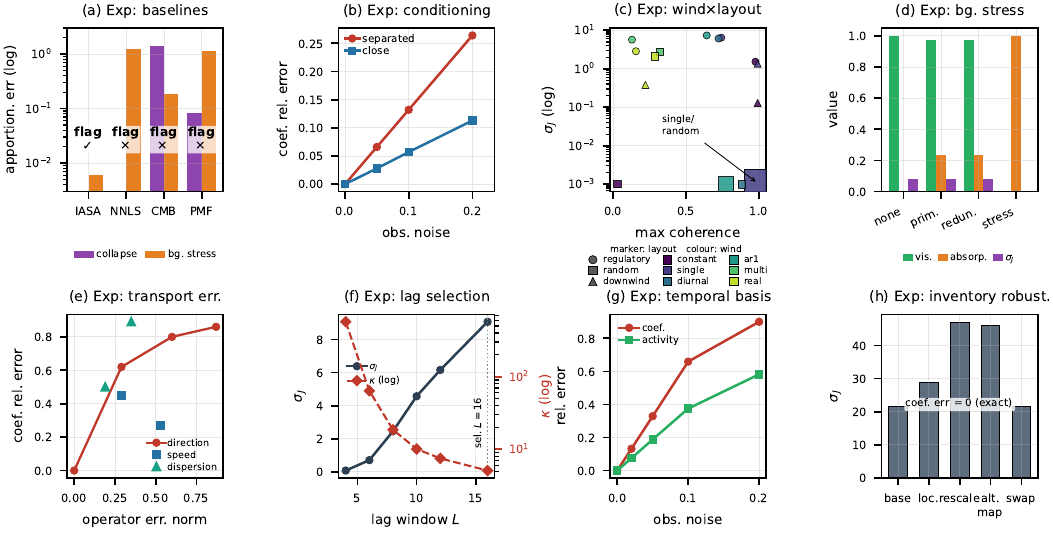}
\caption{Controlled experiments. \textbf{(a)}~IASA vs.\ three
identifiability-blind baselines (plain NNLS, chemical mass balance, PMF/NMF) on the
wind/geometry collapse and source-like background stress; bars are apportionment
share error (log) and the row below marks whether each method raised an
identifiability flag (IASA flags both; no baseline does).
\textbf{(b)}~Conditioning, not noise, sets the coefficient-recovery ceiling
(close $\sigma_J$9.6/$\kappa$2.3 vs.\ separated $\sigma_J$3.3/$\kappa$8.9).
\textbf{(c)}~$\sigma_J$ vs.\ maximum eligible coherence; colour~=~wind
provider (constant$\to$real, dark$\to$light), marker~=~layout (circle regulatory,
square random, triangle downwind), size~$\propto$~coefficient error; only
single/random collapses ($\sigma_J\!\to\!0$, error~$0.57$). \textbf{(d)}~A
source-like background drives absorption to~$1$ and $\sigma_J$/visibility to~$0$.
\textbf{(e)}~Coefficient error grows with the operator-error norm
(direction/dispersion axes; speed non-monotonic); structural PDE mismatch gives
adequacy rejection~$1.0$. \textbf{(f)}~As the lag window grows $\sigma_J$
rises and $\kappa$ falls; the rule selects $L{=}16$ from geometry alone.
\textbf{(g)}~Reconstructed activity error stays below coefficient error at
every noise level. \textbf{(h)}~$\sigma_J$ tracks the inventory version while
recovery stays exact. Values from Appendix~\ref{app:expanded_results}.}
\label{fig:controlled}
\end{figure*}

\noindent\textbf{Baselines.}
\label{subsec:results_baselines}
We compare IASA against three identifiability-blind baselines on the same data:
plain NNLS on the projected system (the identifiability layer ablated), chemical
mass balance on the unprojected response, and a PMF/NMF receptor factorization.  On
the wind/geometry collapse ($\text{coherence}\!\to\!1$, $\sigma_J\!\to\!0$) and the
source-like background stress ($\text{visibility}\!\to\!0$,
$\text{absorption}\!\to\!1$), IASA raises an identifiability flag in every scenario
and seed while none of the baselines do (Figure~\ref{fig:controlled}(a);
Table~\ref{tab:results_baselines}).
Under background stress IASA still recovers apportionment to $<\!0.01$ share error
whereas the baselines are off by $0.12$--$1.2$.  In the collapse the individual split
is non-identifiable (coefficient error $\approx0.57$ for the projected fit and
unbounded for the raw baselines), yet the baselines report it without warning.  A
standard method confidently reports a non-identifiable split; IASA reports only the
resolution its diagnostics support.

\noindent\textbf{Conditioning and Recovery.}
\label{subsec:results_h1}
We vary source geometry and Gaussian observation noise (0--20\% of the maximum
clean sensor signal).  Both geometries are full rank (numerical rank $=$ effective
rank $=2$), so the fixed geometry fixes $\sigma_J$ and $\kappa$ and coefficient
error grows linearly with noise.  The better-conditioned \emph{close} geometry
recovers proportionally more accurately than the \emph{separated} one
(Figure~\ref{fig:controlled}(b); Table~\ref{tab:results_h1}).  This is our central
claim: response-matrix geometry, not noise alone, sets the recovery ceiling.

\noindent\textbf{Wind Diversity and Sensor Geometry.}
\label{subsec:results_h4}
Matched source, basis, and sensor sets are evaluated across six wind providers
(constant, single-direction, diurnal, AR(1), multi-directional, and real gridded
New Delhi wind) and three sensor layouts (regulatory, seeded random,
downwind-focused), with identical source--basis columns on both sides of every
comparison (Figure~\ref{fig:controlled}(c); Table~\ref{tab:results_h4}).  Two
effects dominate: wind diversity improves conditioning (multi-directional and real
wind give the lowest maximum eligible coherence, while a single steady direction
pushes it toward one), and geometry matters as much as wind (under steady wind a
seeded random layout collapses $\sigma_J$ to $\approx0$ and produces the only
nonzero coefficient errors in the sweep, whereas the regulatory layout keeps
$\sigma_J$ well away from zero).  Separate historical and AR(1) window ensembles
both attain full rank with probability $1.0$ and no pairwise ambiguity
(Table~\ref{tab:results_h4_ens}).

\noindent\textbf{Background Stress.}
\label{subsec:results_h3}
We compare no background, the normal rank-four basis, a redundant-column basis with
the same span, and a labeled source-like stress basis, all declared before fitting.
The redundant basis is byte-identical to the primary in every reported quantity,
while the source-like \emph{stress} basis drives $\sigma_J$ and minimum visibility
to $0$ and absorption to $1$ (Figure~\ref{fig:controlled}(d);
Table~\ref{tab:results_h3}).  A too-flexible background can quietly absorb the
source-driven signal and destroy identifiability even when the fitted residual is
essentially unchanged.  Coefficient error is then misleading: it \emph{falls} in
the stress case precisely because the signal has been absorbed.

\noindent\textbf{Transport Error.}
Parametric perturbations of wind speed, direction, and dispersion within the puff
family propagate into attribution.  Coefficient error grows with the operator-error
norm along the direction (to $0.86$ at $20^\circ$) and dispersion (to $0.89$ at
$2\times$) axes, while the speed axis is non-monotonic, a reminder that a coarse
operator-error norm need not order attribution error (Figure~\ref{fig:controlled}(e);
Table~\ref{tab:results_h5a}).  A structural mismatch (open-boundary puff fitted to
edge-hold advection--diffusion observations, mismatch norm $0.44$) drives mean
coefficient error to $0.86$ and the refitted-bootstrap adequacy test rejects in
$100\%$ of trials, confirming power against forward-model misspecification, not only
missing sources.

\noindent\textbf{Lag-Window Selection.}
Physical travel and residence times define an increasing candidate lag grid.  As the
window grows the response stabilizes and conditioning improves monotonically
($\kappa$ falls from $569$ at $L{=}4$ to $5.1$ at $L{=}16$; $\sigma_J$ rises from
$0.08$ to $9.07$) while numerical rank and report-component count stay fixed and
recovery is exact (Figure~\ref{fig:controlled}(f); Table~\ref{tab:results_lag}).
The rule (Eq.~\ref{eq:lag_convergence}) picks the smallest lag whose relative
Frobenius change falls to $\tau_L=10^{-3}$ (here $L{=}16$) from geometry alone,
never from fitted coefficients.

\noindent\textbf{Temporal-Basis Recovery.}
Reconstructing known diurnal, intermittent, and day/night source--basis
coefficients at increasing noise separates coefficient error from
reconstructed-activity error.  The activity-trajectory error stays consistently below
the coefficient error ($0.58$ vs.\ $0.90$ at $20\%$ noise), at a near-constant ratio
$\approx0.57$ (Figure~\ref{fig:controlled}(g); Table~\ref{tab:results_temporal}),
because projecting noisy coefficients through the temporal bases damps isotropic
estimation noise more than the coherent signal.

\noindent\textbf{Inventory Robustness.}
Perturbing inventory locations, spatial scales, category assignments, and map
versions (each a separate scenario, never pooled with transport error) recovers the
known coefficients exactly, but the identifiability geometry moves with the
inventory version.  Rescaling or substituting the map nearly doubles $\sigma_J$ (to
$47.0$ and $46.0$ vs.\ $21.6$ baseline) while a symmetric category swap leaves it
unchanged (Figure~\ref{fig:controlled}(h); Table~\ref{tab:results_h5b}).  The
attribution \emph{target} thus changes with the assumed inventory even when recovery
is exact.  Three further controlled experiments not shown here (coherence and grouped
reporting, missing-source adequacy, and per-sensor footprints) are reported in
Appendix~\ref{app:additional_experiments}.

\subsection{Observed New Delhi}
\label{subsec:observed_results}

\begin{figure}[t]
\centering
\includegraphics[width=\columnwidth]{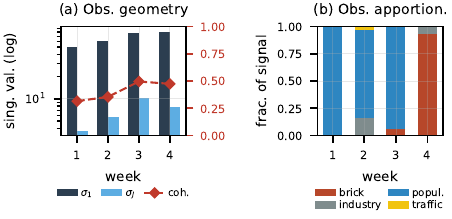}
\caption{Observed New Delhi. \textbf{(a)}~Weeks 1--4: every week is full rank; bars
$\sigma_1$/$\sigma_J$ (left, log) and max eligible coherence (red, right axis
$0$--$1$) (Table~\ref{tab:results_nd_ident}). \textbf{(b)}~proxy apportionment as a
fraction of fitted sensor signal (not physical emissions), swinging population- to
brick-kiln-dominated (Table~\ref{tab:results_nd_appt}).}
\label{fig:observed}
\end{figure}

\noindent\textbf{New Delhi Wind-Imputation Validation.}
\label{subsec:results_wind_imputation}
The learned wind model did not beat a non-spatial city-mean baseline on held-out
station--time vectors, so the spatially resolved kernel imputer is used throughout (the city mean lacks the
spatial resolution the response operator requires).  Dense wind truth is
unavailable, so gridded-field accuracy is assessed only through the controlled
synthetic-wind experiments, where the true wind is known.

\noindent\textbf{New Delhi Identifiability and Report Groups.}
\label{subsec:results_new_delhi_identifiability}
The observed study runs as four consecutive one-week windows (2018-05-01 through
2018-05-28), each a self-contained fit; results are not aggregated across windows. This window is chosen for its near-complete early-record coverage. It is a
demonstration window, the full 2018--2020 record was computationally infeasible.  Every week is full numerical rank ($7$ free coefficients after
the fixed-zero mask), finite-conditioned, with low maximum eligible coherence
($0.32$--$0.50$), an empty weak set, no cross-source ambiguous pairs, and four
singleton report components (Figure~\ref{fig:observed}(a);
Table~\ref{tab:results_nd_ident}).  Each source group is therefore individually
identifiable in every week and no conservative merge is required.

\noindent\textbf{New Delhi Proxy Apportionment and Uncertainty.}
\label{subsec:results_new_delhi_apportionment}
Figure~\ref{fig:observed}(b) reports each week's proxy apportionment as a
fraction of the fitted inventory-attributed sensor signal (the sum over groups of
the $L_1$ fitted per-group sensor-signal magnitude).  These are explicitly \emph{not}
physical-emission shares.  The mixture swings substantially, population-dominated in
weeks 1--3 and brick-kiln-dominated in week 4, which is why the windows are
reported separately rather than as one aggregate estimate.  The fit uses only the
$79$--$85\%$ of rows with a valid PM$_{2.5}$ measurement (no imputation), subtracting
each week's zero-source transported kriged initial-condition baseline (Pusa monitors
averaged) under the gridded kernel wind field. 
The week-4 brick-kiln
dominance (Figure~\ref{fig:observed}(b)) fits the end-of-season firing push before
the monsoon.

\section{Discussion and Conclusion}
\label{sec:discussion_conclusion}

We framed sparse-sensor apportionment as an identifiable-resolution problem:
inventories make attribution meaningful but need not make source groups separable.
Separability is governed by the projected response matrix \(\widetilde H_\Phi\).
Exact identifiability requires \(\operatorname{rank}(\widetilde H_\Phi)=J\), and
robust attribution its singular values, visibility, absorption, and coherence.
Prediction accuracy alone is therefore not evidence of identifiability.  IASA reports
contributions only where this geometry supports them and qualifies them otherwise,
and judges sensor, wind, and background design by fingerprint conditioning and
coherence, not coverage.  All claims are conditional on the
declared wind, transport, basis, inventory, and background.  The basis bounds
recoverable activity, aligned omitted sources stay residual-invisible, and
transport uncertainty (intervals) stays separate from inventory scenarios
(unpooled).  Richer meteorology, spatial diagnostics, and joint basis, transport,
and inventory estimation are natural next steps.

\section*{Acknowledgements}
Ankit Bhardwaj and Lakshminarayanan Subramanian were supported by the NSF Grant (Award Number 2335773) titled ``EAGER: Scalable Climate Modeling using Message-Passing Recurrent Neural Networks''. Lakshminarayanan Subramanian was also funded in part by the NSF Grant (award number OAC-2004572) titled ``A Data-informed Framework for the Representation of Sub-grid Scale Gravity Waves to Improve Climate Prediction''.


\appendix
\section{Detailed Identifiability Proofs}
\label{app:identifiability_proofs}

\noindent\textbf{Proof of Proposition~\ref{prop:rank_identifiability}.}
Let \(A=\widetilde H_\Phi\in\mathbb R^{N\times J}\).  If
\(\operatorname{rank}(A)=J\), then the rank-nullity theorem gives
\(\dim\ker(A)=J-\operatorname{rank}(A)=0\).  Hence
\(\ker(A)=\{\mathbf 0\}\).  If two feasible coefficient vectors
\(\mathbf c_1,\mathbf c_2\in\mathbb R_+^J\) produce the same projected
observation,
\[
A\mathbf c_1=A\mathbf c_2,
\]
then \(A(\mathbf c_1-\mathbf c_2)=0\).  Since the only vector in the kernel is
\(\mathbf 0\), we have \(\mathbf c_1-\mathbf c_2=\mathbf 0\), and therefore
\(\mathbf c_1=\mathbf c_2\).  Thus the coefficient vector is identifiable over
the nonnegative feasible set.

Conversely, suppose \(\operatorname{rank}(A)<J\).  By rank-nullity,
\(\dim\ker(A)>0\), so there exists a nonzero vector
\(\Delta\mathbf c\in\ker(A)\).  Choose any \(\beta>\|\Delta\mathbf c\|_\infty\)
and set \(\mathbf c=\beta\mathbf 1\).  Then \(\mathbf c\in\mathbb R_+^J\), and
\(\mathbf c+\Delta\mathbf c\in\mathbb R_+^J\) because every component satisfies
\[
\beta+\Delta c_j\ge \beta-\|\Delta\mathbf c\|_\infty>0.
\]
The two vectors are distinct because \(\Delta\mathbf c\ne0\), but they produce
the same projected observation:
\[
A(\mathbf c+\Delta\mathbf c)
=A\mathbf c+A\Delta\mathbf c
=A\mathbf c.
\]
Therefore uniform identifiability over \(\mathbb R_+^J\) fails whenever
\(\operatorname{rank}(A)<J\).  This proves the equivalence.

\noindent\textbf{Proof of Proposition~\ref{prop:noise_robust}.}
Again let \(A=\widetilde H_\Phi\), and assume \(A\) has full column rank
\(J\).  The projected observation model is
\[
\widetilde Y=A\mathbf c+\widetilde E.
\]
For unconstrained least squares,
\[
\widehat{\mathbf c}_{\mathrm{LS}}
=\arg\min_{\mathbf z}\|\widetilde Y-A\mathbf z\|_2^2
=A^\dagger\widetilde Y,
\]
where \(A^\dagger\) is the Moore--Penrose pseudoinverse.  Substituting the data
model gives
\[
\widehat{\mathbf c}_{\mathrm{LS}}
=A^\dagger(A\mathbf c+\widetilde E)
=\mathbf c+A^\dagger\widetilde E,
\]
because \(A^\dagger A=I_J\) for a full-column-rank matrix.  Hence
\[
\|\widehat{\mathbf c}_{\mathrm{LS}}-\mathbf c\|_2
=\|A^\dagger\widetilde E\|_2
\le \|A^\dagger\|_2\|\widetilde E\|_2.
\]
The singular values of \(A^\dagger\) are the reciprocals of the nonzero
singular values of \(A\), so \(\|A^\dagger\|_2=1/\sigma_J(A)\).  Therefore
\[
\|\widehat{\mathbf c}_{\mathrm{LS}}-\mathbf c\|_2
\le
\frac{\|\widetilde E\|_2}{\sigma_J(A)}.
\]

For nonnegative least squares, let \(\widehat{\mathbf c}\) minimize
\(\|\widetilde Y-A\mathbf z\|_2^2\) over \(\mathbf z\ge0\).  Since the true
\(\mathbf c\ge0\), it is feasible.  Optimality of \(\widehat{\mathbf c}\)
therefore implies
\[
\|\widetilde Y-A\widehat{\mathbf c}\|_2
\le
\|\widetilde Y-A\mathbf c\|_2
=\|\widetilde E\|_2.
\]
Let \(\Delta=\widehat{\mathbf c}-\mathbf c\).  Using
\(\widetilde Y=A\mathbf c+\widetilde E\),
\[
\widetilde Y-A\widehat{\mathbf c}
=\widetilde E-A\Delta.
\]
Thus
\[
\begin{aligned}
\|A\Delta\|_2
&=\|\widetilde E-(\widetilde E-A\Delta)\|_2\\
&\le
\|\widetilde E\|_2+\|\widetilde E-A\Delta\|_2
\le
2\|\widetilde E\|_2.
\end{aligned}
\]
Full column rank gives the lower singular-value inequality
\[
\|A\Delta\|_2\ge \sigma_J(A)\|\Delta\|_2.
\]
Combining the two inequalities yields
\[
\|\widehat{\mathbf c}-\mathbf c\|_2
=\|\Delta\|_2
\le
\frac{2\|\widetilde E\|_2}{\sigma_J(A)}.
\]
This proves the stated NNLS robustness bound.

\noindent\textbf{Ray distance for nonweak fingerprints
(Equation~\ref{eq:ray_distance_coherence_relation}).}
Fix nonweak projected fingerprints \(\widetilde{\mathbf h}_i,
\widetilde{\mathbf h}_j\), so \(\widetilde{\mathbf h}_j\ne0\).  The ray distance
minimizes
\[
g(\alpha)=\frac{\|\widetilde{\mathbf h}_i-\alpha\widetilde{\mathbf h}_j\|_2^2}
{\|\widetilde{\mathbf h}_i\|_2^2}
\]
over \(\alpha\in\mathbb R\).  This is strictly convex in \(\alpha\), and
\(g'(\alpha)=0\) gives
\(-2\widetilde{\mathbf h}_j^\top(\widetilde{\mathbf h}_i-\alpha
\widetilde{\mathbf h}_j)=0\), i.e.
\[
\alpha^\star=\frac{\widetilde{\mathbf h}_i^\top\widetilde{\mathbf h}_j}
{\|\widetilde{\mathbf h}_j\|_2^2}.
\]
Substituting,
\[
\|\widetilde{\mathbf h}_i-\alpha^\star\widetilde{\mathbf h}_j\|_2^2
=\|\widetilde{\mathbf h}_i\|_2^2
-\frac{(\widetilde{\mathbf h}_i^\top\widetilde{\mathbf h}_j)^2}
{\|\widetilde{\mathbf h}_j\|_2^2},
\]
and dividing by \(\|\widetilde{\mathbf h}_i\|_2^2\) yields
\((d^{\mathrm{ray}}_{ij})^2=1-\rho_{ij}^2\), so
\(d^{\mathrm{ray}}_{ij}=\sqrt{1-\rho_{ij}^2}\).  Minimizing over signed
\(\alpha\) makes the sign inside \(\rho_{ij}\) immaterial.

\noindent\textbf{Proof of Proposition~\ref{prop:operator_perturbation}.}
Let \(\widehat A=\widehat{\widetilde H}_\Phi\) have full column rank and let
\(A^\star=\widetilde H_\Phi^\star\) with \(\|\widehat A-A^\star\|_2\le\delta_H\).
The data obey
\(\widetilde Y=A^\star\mathbf c+\widetilde E
=\widehat A\mathbf c+[\widetilde E+(A^\star-\widehat A)\mathbf c]\).
The least-squares estimate using \(\widehat A\) is
\(\widehat{\mathbf c}=\widehat A^\dagger\widetilde Y\), and
\(\widehat A^\dagger\widehat A=I_J\), so
\[
\widehat{\mathbf c}-\mathbf c
=\widehat A^\dagger\bigl[\widetilde E+(A^\star-\widehat A)\mathbf c\bigr].
\]
Taking norms with \(\|\widehat A^\dagger\|_2=1/\sigma_J(\widehat A)\) and
\(\|(A^\star-\widehat A)\mathbf c\|_2\le\delta_H\|\mathbf c\|_2\) gives
\[
\|\widehat{\mathbf c}-\mathbf c\|_2
\le\frac{\|\widetilde E\|_2+\delta_H\|\mathbf c\|_2}{\sigma_J(\widehat A)},
\]
which is Equation~\ref{eq:operator_error_bound}.

\section{Identifiability Diagnostics: Definitions}
\label{app:diagnostics_definitions}

This appendix gives the full definitions of the diagnostics summarized in
Section~\ref{sec:identifiability_theory} and Table~\ref{tab:diagnostics}.

\begin{table}[t]
\centering
\small
\setlength{\tabcolsep}{3pt}
\caption{Identifiability diagnostics and reporting outputs produced with source
activity estimates.  The lower block lists the set- and graph-valued outputs that
accompany the scalar diagnostics.}
\label{tab:diagnostics}
\begin{tabularx}{\columnwidth}{l
  >{\raggedright\arraybackslash}X
  >{\raggedright\arraybackslash}X
  >{\raggedright\arraybackslash}X}
\toprule
Symbol & Name & Interpretation & Action \\
\midrule
\(r_{\mathrm{num}}(\widetilde H_\Phi)\) & numerical rank & exact separability & require rank \(J\) \\
\(\sigma_J(\widetilde H_\Phi)\) & padded min.\ singular value & noise robustness & flag zero/small \\
\(\kappa(\widetilde H_\Phi)\) & condition number & error amplification & \(\infty\) if deficient \\
\(r_{\mathrm{eff}}\) & effective rank & noise-level resolution & reduce grouping \\
\(v_j\) & coefficient visibility & measurable effect & flag weak \\
\(a_j\) & background absorption & signal lost to \(Q\) & warn after projection \\
\(\rho_{ij}\) & pairwise coherence & pairwise ambiguity & merge coherent \\
\(d^{\mathrm{ray}}_{ij}\) & ray distance & ray separation & N/A if weak \\
\midrule
\(\mathcal W\) & weak set & undetectable coefficients & flag, no edges \\
\(\mathcal A\) & ambiguous-pair set & confusable pairs & source edges \\
\(\mathcal E_{\mathrm{src}}\) & source ambiguity graph & source-level ambiguity & retain triggers \\
\(\mathcal G\) & connected report groups & identifiable resolution & report grouped \\
n/a & global unresolved warning & deficient, no edge & warn globally \\
\bottomrule
\end{tabularx}
\end{table}

For floating-point computation, the default numerical rank tolerance is
\(\tau_{\mathrm{num}}=\max(N,J)\,\epsilon_{\mathrm{mach}}\,\sigma_1\) with
numerical rank \(r_{\mathrm{num}}=\#\{i:\sigma_i>\tau_{\mathrm{num}}\}\),
where \(\epsilon_{\mathrm{mach}}\) is the machine precision of the SVD dtype
(about \(2.2\times10^{-16}\) in double precision).  This
\(\tau_{\mathrm{num}}\) is a purely numerical tolerance for detecting exact rank
deficiency; it is distinct from the scientific, noise-dependent threshold
\(\tau_\sigma\) below.  The primary identifiability score is the
\emph{padded minimum singular value} \(\sigma_J(\widetilde H_\Phi)\), where the
singular spectrum is padded with zeros whenever \(J>N\) or
\(\widetilde H_\Phi\) is rank deficient.  A rank-deficient response therefore
scores zero and is never reported as stable.
The condition number is \(\kappa(\widetilde H_\Phi)=\sigma_1/\sigma_J\) when
\(r_{\mathrm{num}}=J\) and \(\infty\) when \(r_{\mathrm{num}}<J\).
The smallest numerically positive value may additionally be reported as
\(\sigma_{\min,+}\), but it is not the identifiability score and never turns a
rank-deficient matrix into a stable one.  For a separately chosen,
noise-dependent threshold \(\tau_\sigma\), the effective rank is
\(r_{\mathrm{eff}}(\tau_\sigma)
=\#\{i:\sigma_i(\widetilde H_\Phi)>\tau_\sigma\}\).
Unlike \(\tau_{\mathrm{num}}\), the threshold \(\tau_\sigma\) is set from the
observation-noise scale, not from machine precision.  Given a per-row noise
standard deviation \(\sigma_e\) (from sensor calibration or a bootstrap residual
estimate), a singular direction is only resolvable when its singular value
exceeds the level at which iid noise projects onto it; a workable predeclared
default is \(\tau_\sigma=\sigma_e\sqrt{N}\).  It is recorded independently of
\(\tau_{\mathrm{num}}\) and is not tuned on source-recovery results.  If
\(r_{\mathrm{eff}}(\tau_\sigma)<J\), the sensor network does not support reliable
estimation of all \(J\) source--basis coefficients at that noise level.

For coefficient-specific diagnostics, let \(\widetilde{\mathbf h}_j\),
\(j=(k,b)\), be a projected source--basis fingerprint.  Its visibility and the
weak set at threshold \(\tau_v\ge0\) are
\(v_j=\|\widetilde{\mathbf h}_j\|_2\) and
\(\mathcal W=\{j:v_j\le\tau_v\}\).
A weak coefficient has little measurable effect after transport and background
correction: \(v_j\) is the sensor-time magnitude produced by one unit of
coefficient \(c_j\), and \(\tau_v\) marks fingerprints below the minimum
detectable signal.  A practical predeclared rule sets \(\tau_v\) from the noise
scale and a target signal-to-noise ratio, e.g.\ \(\tau_v=\mathrm{SNR}_{\min}\,
\sigma_e\), so a unit coefficient is flagged weak when even its full response is
indistinguishable from noise.  For eligible indices \(i,j\notin\mathcal W\),
pairwise ambiguity is measured by projected fingerprint coherence
\(\rho_{ij}=|\widetilde{\mathbf h}_i^\top\widetilde{\mathbf h}_j|/
(\|\widetilde{\mathbf h}_i\|_2\|\widetilde{\mathbf h}_j\|_2)\).
Values near one indicate nearly proportional sensor-time signatures.  The
ambiguous-pair set is
\(\mathcal A=\{(i,j):i,j\notin\mathcal W,\ \rho_{ij}>\tau_\rho\}\),
\(\tau_\rho\in[0,1]\).  The threshold \(\tau_\rho\) is predeclared or calibrated
on controlled experiments, not optimized after inspecting source recovery; a
conservative default takes \(\tau_\rho\) close to one so that only
near-proportional fingerprints are merged.  Coherence and ray distance are
reported only for eligible (nonweak) pairs; entries involving a weak fingerprint
are shown as ``N/A'' and excluded from extrema, while the weak-visibility flag
is always reported.

To express the same scale-invariant geometry as a distance, we also report the
ray distance for eligible, generally signed projected fingerprints,
\(d^{\mathrm{ray}}_{ij}
=\min_{\alpha\in\mathbb R}
\|\widetilde{\mathbf h}_i-\alpha\widetilde{\mathbf h}_j\|_2/
\|\widetilde{\mathbf h}_i\|_2\).
For nonweak fingerprints,
\begin{equation}
d^{\mathrm{ray}}_{ij}=\sqrt{1-\rho_{ij}^2},
\label{eq:ray_distance_coherence_relation}
\end{equation}
which we derive in Appendix~\ref{app:identifiability_proofs}; numerical
implementations evaluate \(\sqrt{\max(0,1-\rho_{ij}^2)}\).  This is an unoriented
linear-ray diagnostic, not a claim that projected fingerprints are nonnegative.
High coherence and small ray distance encode the same pairwise geometry, so ray
distance is a diagnostic geometrically equivalent to coherence under this
normalization, not an independent identifiability criterion.  Like coherence, it
is ``N/A'' for pairs involving \(\mathcal W\) and excluded from distance extrema.

\noindent\textbf{Background absorption.}
Background correction can improve robustness by removing broad temporal
variation, but it can also remove source-driven signal.  Let \(P_Q=QQ^\dagger\).
For coefficient fingerprint \(j=(k,b)\), define the background absorption
fraction \(a_j=\|P_Q\mathbf h_j^{\mathrm{lag}}\|_2/\|\mathbf h_j^{\mathrm{lag}}\|_2\).
If \(\|\mathbf h_j^{\mathrm{lag}}\|_2\) is zero at numerical tolerance,
\(a_j\) is undefined and reported as N/A; the coefficient remains explicitly
weak.
If \(a_j\approx 1\), most of the raw source--basis fingerprint lies in the
background space, and the projected visibility \(v_j\) will be small even if
the raw fingerprint is large.  This is why identifiability must be assessed
after projection.

\section{Temporal-Basis Notation and Diagnostic Conventions}
\label{app:temporal_basis_notation}

\begin{table}[h]
\centering
\small
\caption{Indices and objects in the temporal-basis apportionment model.}
\label{tab:temporal_basis_notation}
\begin{tabularx}{\columnwidth}{llX}
\toprule
Symbol & Shape or range & Meaning \\
\midrule
\(k\) & \(1,\ldots,K\) & source-group index \\
\(b\) & \(1,\ldots,B\) & temporal-basis index \\
\(j=(k,b)\) & \(1,\ldots,J\) & source-major, basis-minor coefficient index \\
\(\Phi\) & \(T\times B\) & nonnegative source-activity basis \\
\(C\) & \(K\times B\) & nonnegative source--basis coefficients \\
\(\mathbf c\) & \(J=KB\) & source-major vectorization of \(C\) \\
\(H_\Phi^{\mathrm{lag}}\) & \(N\times J\) & observed-row coefficient fingerprints \\
\(\widetilde H_\Phi\) & \(N\times J\) & background-projected fingerprints \\
\(\widetilde{\mathbf h}_{kb}\) & \(N\) & fingerprint for coefficient \(c_{kb}\) \\
\bottomrule
\end{tabularx}
\end{table}

The constant-activity model uses \(B=1\), \(\phi_1(t)=1\), and
\(c_{k1}=\theta_k\).  In the general model, diagnostics are computed for the
\(J\) coefficient fingerprints.  Source-level activity trajectories are
reconstructed as
\[
\widehat\theta_k(t)=\sum_{b=1}^{B}\widehat c_{kb}\phi_b(t);
\]
fingerprint columns are not summed to create an artificial source-level
diagnostic vector.

For a source-level ambiguity summary, IASA reports the largest eligible
coherence between coefficients belonging to different sources and retains the
source--basis pair that attains it.  Coherence and ray distance are undefined
for pairs involving a coefficient with \(v_j\le\tau_v\); tables display these
entries as N/A, exclude them from pairwise extrema, and report the weak
coefficient separately.

\section{Transport Response Construction and Fit Details}
\label{app:transport_response}

This appendix gives the construction, background, and estimation details deferred
from Section~\ref{sec:problem_setup}.

\noindent\textbf{Wind imputer query and unit conversion.}
The observed station vectors and masks are supplied to a coordinate-query wind
imputer: given sparse station observations
\(\{(\mathbf z_i,t,\mathbf u_{i,t},m_{i,t})\}\), it predicts transport vectors
at arbitrary spatial query locations and times.  For the IASA response operator,
we query the imputer on every response-grid cell \(\mathbf x_g\) and hour
\(t\), producing
\begin{equation}
\widehat{\mathbf w}_t(\mathbf x_g)
=
\widehat f_{\omega}\!\left(\mathbf x_g,t;
\{(\mathbf z_i,t',\mathbf u_{i,t'},m_{i,t'})\}\right),
\; g=1,\ldots,n.
\label{eq:fieldformer_wind_field}
\end{equation}
This yields a gridded wind field
\(\widehat W\in\mathbb R^{T\times n\times 2}\), rather than a station-only or
city-averaged sequence.  Real-data validation is performed by masking observed
station vectors and measuring held-out station-time error; dense city-wide wind
truth is not assumed.  Meteorological speed must be converted to grid
displacement before puff advection.  For response timestep \(\Delta t_{\mathrm{s}}\)
seconds and cell dimensions \(\Delta x_{\mathrm{m}},\Delta y_{\mathrm{m}}\), we use
\begin{equation}
v_x^{\mathrm{grid}}(\mathbf x_g,t)
=
\frac{\widehat U_x(\mathbf x_g,t)\Delta t_{\mathrm{s}}}{\Delta x_{\mathrm{m}}},
\qquad
v_y^{\mathrm{grid}}(\mathbf x_g,t)
=
\frac{\widehat V_y(\mathbf x_g,t)\Delta t_{\mathrm{s}}}{\Delta y_{\mathrm{m}}}.
\label{eq:wind_unit_conversion}
\end{equation}
Every response artifact records these scales, the coordinate convention, the
wind units, and the exact converted sequence.

\noindent\textbf{Wind-field ensembles.}
The imputed gridded wind field is an estimated transport input, not ground truth.
To propagate wind-imputation uncertainty, we construct wind-field ensembles
\(\{\widehat{\mathbf w}^{(r)}_t(\mathbf x_g)\}_{r=1}^R\) using held-out-calibrated
prediction error, station bootstrap resampling, checkpoint ensembles, or
perturbations matched to validation residuals.  Each ensemble member is queried
on the same response grid and converted to grid displacement before response
construction.  The resulting response matrices are tagged as transport
ensembles and are kept separate from inventory-robustness scenarios.

\noindent\textbf{Puff dynamics and dispersion.}
We instantiate the response with a Gaussian puff approximation.  For a
unit release from source cell \(i\) at location \(\mathbf r_i\) and release time
\(\tau\), the puff center is initialized as \(\mathbf z_i(\tau)=\mathbf r_i\) and
advected by the interpolated wind field,
\(d\mathbf z_i(a)/da=\widehat{\mathbf w}_a(\mathbf z_i(a))\), \(a\in[\tau,t]\).
Source cells have integer centers and domain extents
\([-0.5,N_x-0.5]\times[-0.5,N_y-0.5]\).  With substep size \(\delta a\), the
trajectory is discretized as
\(\mathbf z_i^{r+1}=\mathbf z_i^r+\delta a\,\widehat{\mathbf w}_{a_r}(\mathbf z_i^r)\).
The final substep is shortened so that the trajectory lands exactly at the
requested observation time.  If the center exits \(\Omega\), the whole
remaining puff is removed and never reflected, wrapped, clamped, or reinserted.
Otherwise its contribution is spread with an anisotropic Gaussian kernel
\begin{equation}
K_\psi(\mathbf x;\mathbf z_i,\Sigma_i)
=\frac{\exp\left(-\frac12\|\mathbf x-\mathbf z_i\|_{\Sigma_i^{-1}}^2\right)}
{2\pi |\Sigma_i|^{1/2}},
\label{eq:gaussian_kernel}
\end{equation}
where \(\psi\) denotes dispersion parameters.  A simple covariance model is
\begin{equation}
\Sigma_i(t,\tau)
= R_i(t,\tau)
\begin{bmatrix}
\sigma_\parallel^2(t-\tau) & 0\\
0 & \sigma_\perp^2(t-\tau)
\end{bmatrix}
R_i(t,\tau)^\top,
\label{eq:dispersion_covariance}
\end{equation}
with \(R_i\) aligned to the mean wind direction along the trajectory.  When the
mean wind norm is below the numerical threshold, the downwind direction is
undefined; the implementation uses the positive-\(x\) axis as a deterministic
tie-breaker for the anisotropic covariance orientation, recorded in the response
metadata and affecting only near-calm releases.  The age in this covariance is
lower-bounded by a positive minimum dispersion time, so same-time releases remain
finite: \(\text{effective age} = \max(t-\tau, t_{\min})\).

\noindent\textbf{Sensor evaluation and loss bookkeeping.}
Response-matrix entries are computed by evaluating the kernel directly at
sensor coordinates:
\begin{equation}
\left[O G_{t,\tau}^{\partial}
(\widehat{\mathbf w}_{\tau:t};\psi)\mathbf e_i\right]_s
=\chi_i(t,\tau)K_\psi(\mathbf x_s;\mathbf z_i(t),\Sigma_i(t,\tau)).
\label{eq:sensor_response}
\end{equation}
Here \(\chi_i(t,\tau)\) is an open-boundary survival indicator for a release
from source cell \(i\) at release time \(\tau\): it is one while the advected
kernel center remains inside the modeled domain and zero after the release has
exited.  Boundary and truncation losses are recorded only as diagnostic
quantities and are never used to renormalize the sensor response.  Boundary
loss denotes the portion of the Gaussian kernel lying outside the modeled
spatial domain, while truncation loss denotes the portion omitted by evaluating
retained mass only over a finite kernel support; both are estimated separately
by grid quadrature over the response grid.  These kernel losses are distinct
from whole-release exit loss, where the advected release center leaves the open
domain and no later sensor contribution is assigned to that release.  Because
the response matrix evaluates concentration at sensor points rather than
integrating over area-weighted grid cells, summing entries across sensors or
repeated sensor-time observations does not estimate retained pollutant mass.

\noindent\textbf{Baseline policy and differentiability.}
Response matrices are constructed for source-induced increments relative to a
declared initial-condition baseline.  The baseline policy is fixed before
source fitting, recorded with the response artifact, and applied consistently to
the observations and response rows.  Application-specific choices, such as
whether to subtract a zero-source transported initial field or to use a
zero-initial-state response, are part of the experimental instantiation rather
than the generic response definition.  The open-boundary lagged plume response is
implemented in PyTorch; tensor operations retained in the graph can be
differentiated, but discrete exit events and reporting logic are not assigned
gradients.

\noindent\textbf{Background basis and implicit projection.}
The normal background basis may contain a global constant, centered and scaled
elapsed-time polynomials, local-clock daily sine/cosine harmonics,
reference-coded day effects, standardized regional sensor-coordinate trends,
reference-coded sensor offsets, and a row-aligned user basis.  It depends only
on timestamps, day labels, sensor identity, and sensor coordinates, and its
effective rank is capped at eight.  Once the primary \(Q\) is fixed, its
coefficients are estimated implicitly by projection, or explicitly in the joint
fit; each alternative source-independent basis reports the resulting visibility,
absorption, rank, conditioning, and ambiguity components.  Letting
\(Q=U\Sigma V^\top\) be a thin singular value decomposition and retaining
columns \(U_r\) above the recorded rank tolerance, projection is applied
implicitly,
\begin{equation}
P_Q^\perp X=X-U_r(U_r^\top X),
\label{eq:implicit_background_projection}
\end{equation}
without constructing an \(N\times N\) matrix.  The row metadata for \(Y\),
\(H_\Phi^{\mathrm{lag}}\), and \(Q\) must be identical: each row must refer to the
same timestamp, sensor identity, and sensor index in the same order before
projection is applied.

\noindent\textbf{Regularizers and joint fit.}
The regularizer in Equation~\ref{eq:nnls_estimator} can be ridge stabilization
\(R(\mathbf c)=\|\mathbf c\|_2^2\) or a weak prior around inventory-derived
activity levels \(R(\mathbf c)=\|\mathbf c-\mathbf c_0\|_2^2\).  Equivalently, one
may fit background and source coefficients jointly,
\begin{equation}
(\widehat{\mathbf c},\widehat{\boldsymbol\gamma})
=\arg\min_{\mathbf c\ge 0,\boldsymbol\gamma}
\|Y-H_{\Phi}^{\mathrm{lag}}\mathbf c-Q\boldsymbol\gamma\|_2^2
+\lambda R(\mathbf c).
\label{eq:joint_fit}
\end{equation}
The projected formulation is used for identifiability because it exposes the
part of each source--basis fingerprint that remains after background correction.

\noindent\textbf{Constrained end-to-end refinement.}
IASA first constructs a fixed response matrix from the pretrained wind field and
default dispersion parameters, fits source activities, and computes
identifiability diagnostics for that declared response.  It then optionally
performs a constrained local refinement of the wind field, dispersion
parameters, source coefficients, and background coefficients, initialized at the
fixed-response solution and allowed only to make small physically constrained
corrections, so that improved sensor fit does not come from arbitrary changes to
transport.  Let \(\phi\) denote the wind-imputer parameters and \(\psi\) the
dispersion parameters of the puff response (such as \(\sigma_\parallel\),
\(\sigma_\perp\), and the minimum dispersion age), with \(\phi_0\) and \(\psi_0\)
their pretrained or default values.  We use \(R_{\mathrm{sm}}\) for a predeclared
wind-field smoothness penalty (temporal or spatial squared differences) and
\(\Psi_{\mathrm{phys}}\) for the physically admissible dispersion-parameter set.
The refinement objective is
\begin{align}
\mathcal L_{\mathrm{refine}}
=&\|Y-H_{\Phi}^{\mathrm{lag}}(\phi,\psi)\mathbf c-Q\boldsymbol\gamma\|_2^2
+\lambda_\theta R(\mathbf c) \nonumber\\
&+\lambda_w\|\widehat{\mathbf w}_{1:T}^{\phi}-\widehat{\mathbf w}_{1:T}^{\phi_0}\|_2^2
+\lambda_\psi\|\psi-\psi_0\|_2^2 \nonumber\\
&+\lambda_{\mathrm{sm}}R_{\mathrm{sm}}(\widehat{\mathbf w}^{\phi}),
\label{eq:refinement_objective}
\end{align}
subject to \(\mathbf c\ge 0\), \(\psi\in\Psi_{\mathrm{phys}}\), and
\(\|\widehat{\mathbf w}_{1:T}^{\phi}-\widehat{\mathbf w}_{1:T}^{\phi_0}\|_\infty
\le \epsilon_w\).  The refined response is accepted only as a constrained local
correction (acceptance criteria in Appendix~\ref{app:reporting_details}) and is
reported with its own response diagnostics.  Without these restrictions, wind and
plume parameters could fit sparse pollution sensors while weakening the physical
and source-level interpretation of \(\widehat{\mathbf c}\) and the reconstructed
activities.

\section{Response-Matrix Perturbations}
\label{app:operator_error_details}

This appendix expands the response-error discussion of
Section~\ref{sec:identifiability_theory}.  Write the projected
response discrepancy as
\(\Delta_H=\Delta_{\mathrm{tr}}+\Delta_{\mathrm{inv}}+\Delta_{\mathrm{int}}\),
where \(\Delta_{\mathrm{tr}}\) is the transport term, \(\Delta_{\mathrm{inv}}\)
the inventory term, and \(\Delta_{\mathrm{int}}\) their interaction.  The
decomposition is used to keep uncertainty honest: \(\Delta_{\mathrm{tr}}\) is
propagated through transport ensembles and may become an uncertainty interval,
whereas \(\Delta_{\mathrm{inv}}\) is reported only as named robustness scenarios
and is never pooled into an interval, and \(\Delta_{\mathrm{int}}\) is recorded
rather than pooled separately.  Let \(\widetilde H_\Phi^\star\) be the true
projected response and \(\widehat{\widetilde H}_\Phi\) the constructed one, with
\(\|\widehat{\widetilde H}_\Phi-\widetilde H_\Phi^\star\|_2\le\delta_H\).  Then
\(\widetilde Y=\widehat{\widetilde H}_\Phi\mathbf c
+[\widetilde E+(\widetilde H_\Phi^\star-\widehat{\widetilde H}_\Phi)\mathbf c]\),
so response error acts like additional observation error.

\begin{proposition}[Perturbation bound]
\label{prop:operator_perturbation}
Assume \(\widehat{\widetilde H}_\Phi\) has full column rank and
\(\|\widehat{\widetilde H}_\Phi-\widetilde H_\Phi^\star\|_2\le\delta_H\).
The unconstrained least-squares estimate using
\(\widehat{\widetilde H}_\Phi\) satisfies
\begin{equation}
\|\widehat{\mathbf c}-\mathbf c\|_2
\le
\frac{\|\widetilde E\|_2+\delta_H\|\mathbf c\|_2}
{\sigma_J(\widehat{\widetilde H}_\Phi)}.
\label{eq:operator_error_bound}
\end{equation}
\end{proposition}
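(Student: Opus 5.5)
The argument reuses the pseudoinverse route from the unconstrained half of Proposition~\ref{prop:noise_robust}, with the response error folded into an effective noise term. Write \(\widehat A=\widehat{\widetilde H}_\Phi\) for the constructed projected response and \(A^\star=\widetilde H_\Phi^\star\) for the true one. The data are generated by the true operator, \(\widetilde Y=A^\star\mathbf c+\widetilde E\). Adding and subtracting \(\widehat A\mathbf c\) rewrites this as
\[
\widetilde Y=\widehat A\mathbf c+\mathbf r,\qquad \mathbf r=\widetilde E+(A^\star-\widehat A)\mathbf c .
\]
This is exactly the decomposition stated just before the proposition, so the misspecified problem takes the form of the well-specified model with noise \(\mathbf r\).

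Since \(\widehat A\) has full column rank, the unconstrained least-squares estimate is \(\widehat{\mathbf c}=\widehat A^\dagger\widetilde Y\), and \(\widehat A^\dagger\widehat A=I_J\). Substituting the rewritten model gives the error identity
\[
\widehat{\mathbf c}-\mathbf c=\widehat A^\dagger\mathbf r .
\]

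Next I take norms. The operator norm satisfies \(\|\widehat A^\dagger\|_2=1/\sigma_J(\widehat A)\), as already used in Proposition~\ref{prop:noise_robust}. The triangle inequality gives \(\|\mathbf r\|_2\le\|\widetilde E\|_2+\|(A^\star-\widehat A)\mathbf c\|_2\). Submultiplicativity together with the hypothesis \(\|\widehat A-A^\star\|_2\le\delta_H\) bounds the second term by \(\delta_H\|\mathbf c\|_2\). Chaining these estimates yields Equation~\ref{eq:operator_error_bound}.

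I expect no real obstacle here. The one point needing care is that the singular value in the denominator belongs to \(\widehat A\), not \(A^\star\). The estimator is built from the constructed operator, so full column rank is assumed only for \(\widehat A\). The true operator \(A^\star\) may even be rank deficient; no perturbation argument of Weyl type is required.

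I also note that the bound is stated only for the unconstrained estimate. For the nonnegative estimate, the feasibility argument of Proposition~\ref{prop:noise_robust} would apply with \(\mathbf r\) in place of \(\widetilde E\) and would give the same bound with a factor \(2\). That extension is not claimed in the statement, so I would not need it.
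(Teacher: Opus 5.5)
Your proposal is correct and follows the paper's proof essentially step for step: the paper also rewrites the data as \(\widetilde Y=\widehat A\mathbf c+[\widetilde E+(A^\star-\widehat A)\mathbf c]\), uses \(\widehat A^\dagger\widehat A=I_J\) to get \(\widehat{\mathbf c}-\mathbf c=\widehat A^\dagger[\widetilde E+(A^\star-\widehat A)\mathbf c]\), and concludes with \(\|\widehat A^\dagger\|_2=1/\sigma_J(\widehat A)\) and \(\|(A^\star-\widehat A)\mathbf c\|_2\le\delta_H\|\mathbf c\|_2\). Your side remark that the nonnegative estimate satisfies the same bound with a factor \(2\) is also sound (the true \(\mathbf c\ge0\) is feasible), though the statement does not require it.
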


A short proof, reusing the pseudoinverse bound of
Proposition~\ref{prop:noise_robust}, appears in
Appendix~\ref{app:identifiability_proofs}.
The same norm bound applies to all three discrepancy terms, but their
interpretations are not interchangeable.  Wind and dispersion ensembles
quantify transport uncertainty and may support uncertainty intervals under the
ensemble model.  Alternative inventory maps, locations, labels, or scales are
inventory robustness scenarios, not confidence intervals.  They are reported
separately, and source-specific attribution remains conditional on the supplied
inventory.  Either error type is most damaging when the projected response is
already ill-conditioned.

\section{Per-Sensor Source Footprints}
\label{app:footprint_details}

This appendix expands Section~\ref{sec:identifiability_theory}.  For an
observed row \((s,t)\), the background-corrected fit decomposes exactly as
\(\widetilde y_{s,t}=\sum_{k,b}\widetilde H_{\Phi,(s,t),(k,b)}\,\widehat c_{kb}\),
so the contribution of source \(k\) to sensor \(s\) over the record is
\(\widehat Y^{(s)}_k=\sum_t\sum_b
H_{\Phi,(s,t),(k,b)}^{\mathrm{lag}}\widehat c_{kb}\).  The projected form isolates
the identifiable part; the unprojected form gives the raw fitted sensor
contribution before background removal.  The spatial origin of a sensor's signal
is the pullback of its response row onto the source grid: for sensor \(s\) at time
\(t\), \(F_{s,t}(i)=\sum_{\ell\in\mathcal L_t}
[O G_{t,t-\ell}^{\partial}(\widehat{\mathbf w}_{t-\ell:t})]_{s,i}\ge 0\) and
\(F_s(i)=\sum_t F_{s,t}(i)\), a nonnegative field over source cells \(i\)
recording which upwind cells influence sensor \(s\).  Weighting by
\(\phi_b(t-\ell)\), the inventory map \(\mathbf s_k\), and \(\widehat c_{kb}\)
resolves the fitted contribution of each source group to each sensor by cell of
origin.  This is the puff response of Section~\ref{subsec:plume_response} read
backward from the sensor and requires no new transport operator.  Per-sensor
attribution uses only the rows of \(\widetilde H_\Phi\) for that sensor: for the
row submatrix \(\widetilde H_\Phi^{(s)}\),
\(\widetilde H_\Phi^{(s)\top}\widetilde H_\Phi^{(s)}\preceq
\widetilde H_\Phi^\top\widetilde H_\Phi\), so deleting rows cannot increase the
smallest singular value or the rank and a single sensor is never more
identifiable than the pooled network.  Footprints and per-sensor contributions
are therefore reported as spatial overlays of the global fit; per-sensor source
\emph{shares} are asserted only at the globally identifiable resolution,
aggregating to a report group wherever the pooled diagnostics require one, which
keeps the spatial view from implying separations the sensing system does not
support.

\section{Uncertainty, Adequacy, and Reporting Details}
\label{app:reporting_details}

This appendix expands the reporting components summarized in
Section~\ref{sec:identifiability_aware_method}, and gives the IASA thresholds
(Table~\ref{tab:thresholds}).

\begin{table}[t]
\centering
\small
\setlength{\tabcolsep}{3pt}
\caption{IASA thresholds, their meaning, and how they are set.  None is
optimized on source-recovery error.}
\label{tab:thresholds}
\begin{tabular}{lll}
\toprule
Threshold & Meaning & How set \\
\midrule
\(\tau_{\mathrm{num}}\) & numerical rank tol.\ & \(\max(N,J)\epsilon_{\mathrm{mach}}\sigma_1\) \\
\(\tau_\sigma\) & effective-rank floor & noise scale, e.g.\ \(\sigma_e\sqrt N\) \\
\(\tau_v\) & visibility floor & \(\mathrm{SNR}_{\min}\,\sigma_e\) \\
\(\tau_\rho\) & ambiguity coherence & predeclared, near \(1\) \\
\(\tau_\rho^{\mathrm{ref}}\) & refinement coherence & default \(\tau_\rho^{\mathrm{ref}}=\tau_\rho\) \\
\(\epsilon_w\) & wind-drift cap & physical wind tolerance \\
\bottomrule
\end{tabular}
\end{table}

\noindent\textbf{Uncertainty reporting.}
Let \(\widetilde r=\widetilde Y-\widetilde H_\Phi\widehat{\mathbf c}\) be the
projected residual and
\(\widehat\sigma^2=\|\widetilde r\|_2^2/\max(N-r_{\mathrm{eff}}(\tau_\sigma),1)\)
a residual variance estimate.  For an unconstrained or active-set approximation,
let \(\mathcal A_c=\{j:\widehat c_j>\tau_c\}\) and let
\(\widetilde H_{\Phi,\mathcal A_c}\) contain the active columns.  The covariance
of the active estimates is approximated by
\begin{equation}
\operatorname{Cov}(\widehat{\mathbf c}_{\mathcal A_c})
=
\widehat\sigma^2
(\widetilde H_{\Phi,\mathcal A_c}^\top
\widetilde H_{\Phi,\mathcal A_c}+\lambda I)^\dagger.
\label{eq:active_covariance}
\end{equation}
The use of \(N-r_{\mathrm{eff}}(\tau_\sigma)\) is a heuristic degrees-of-freedom
count that does not exactly account for the background rank \(r\) or the NNLS
active-set constraints.  Equation~\ref{eq:active_covariance} is the standard
active-set / ridge least-squares covariance approximation
\citep{seber2003linear}: its diagonal entries give per-coefficient uncertainty
and its off-diagonal entries expose coupled (jointly uncertain) estimates.  For
each transport ensemble member from
Section~\ref{subsec:wind_field_estimation}, IASA rebuilds
\(\widetilde H_\Phi^{(b)}\), refits \(\widehat{\mathbf c}^{(b)}\), and reports
empirical quantiles across members as transport uncertainty.  Every ensemble is
tagged by provenance: wind and dispersion members have
\texttt{ensemble\_kind=transport} and may form transport-uncertainty intervals,
whereas alternative inventory maps, source locations, labels, or scales have
\texttt{ensemble\_kind=inventory} and produce scenario-wise robustness tables.
The two kinds are never pooled into one interval, and active-set intervals
describe the fitted observation model without converting inventory scenarios into
probabilistic uncertainty.

\noindent\textbf{Residual model-adequacy check.}
Let \(Z\) be an orthonormal basis for the background-orthogonal observed-row
space, and define the nondegenerate residual coordinates and externally supplied
noise covariance
\begin{equation}
\bar r=Z^\top(Y-H_\Phi^{\mathrm{lag}}\widehat{\mathbf c}),
\qquad
\bar\Sigma_e=Z^\top\Sigma_e Z,
\label{eq:adequacy_coordinates}
\end{equation}
where \(\Sigma_e\) is an externally declared observation-noise covariance and
\(Z\) removes the background directions so that \(\bar r\) carries no degrees of
freedom absorbed by \(Q\).  For independent homoscedastic noise
\(\Sigma_e=\sigma_e^2 I\), orthonormality of \(Z\) gives
\(\bar\Sigma_e=\sigma_e^2 I\) and \(T_{\mathrm{res}}=\|\bar r\|_2^2/\sigma_e^2\).
When \(\bar\Sigma_e\) is calibrated independently of this fitted residual and is
positive definite, the adequacy statistic is
\begin{equation}
T_{\mathrm{res}}=\bar r^\top\bar\Sigma_e^{-1}\bar r.
\label{eq:residual_adequacy_statistic}
\end{equation}
Its null distribution is not a plain \(\chi^2\), because \(\widehat{\mathbf c}\)
is fitted from the same data; IASA therefore calibrates it by parametric
bootstrap.  It simulates \(B_{\mathrm{res}}=1000\) synthetic observation sets from
the fitted source/background model plus the declared noise model, refits the
coefficients on each draw, and recomputes \(T_{\mathrm{res}}\) to form the null
distribution.  At \(\alpha=0.05\) it flags inadequacy when the observed statistic
exceeds the empirical \((1-\alpha)\)-quantile; the Monte Carlo \(p\)-value uses
the standard add-one correction.  If no externally calibrated noise model is
supplied, IASA reports raw and projected residual norms and sensor-wise,
time-wise, and autocorrelation summaries with
\texttt{calibration\_status=uncalibrated} and emits no pass/fail adequacy claim.
Rejection establishes a mismatch with the declared model but does not identify a
missing source, and non-rejection does not establish inventory completeness: an
omitted source whose sensor-time signature lies in
\(\operatorname{span}[H_\Phi^{\mathrm{lag}},Q]\) can be absorbed by fitted terms
and remain residual-invisible.

\noindent\textbf{Wind-distribution diagnostics.}
Diagnostics for one realized wind window answer a retrospective question.
Prospective network adequacy is assessed empirically over predeclared historical
or simulated wind windows.  Across those response matrices, IASA reports the
5th, 50th, and 95th percentiles of \(\sigma_J\), the probability of full
numerical and effective rank, the probability that each coefficient is weak,
the probability that each source pair is ambiguous, and the frequency of each
conservative report component.  These are Monte Carlo or empirical
distributional summaries under the sampled wind population, not a new
identifiability theorem.

\noindent\textbf{Acceptance criteria for refinement.}
If the optional constrained refinement of
Appendix~\ref{app:transport_response} is used, it is accepted only if it improves
fit without degrading separability.  Let \(\widetilde H_{\Phi,0}\) be the
projected response before refinement and \(\widetilde H_{\Phi,\mathrm{ref}}\)
after refinement.  We require
\(\sigma_J(\widetilde H_{\Phi,\mathrm{ref}})\ge
(1-\eta_{\mathrm{id}})\sigma_J(\widetilde H_{\Phi,0})\) and
\(\max_{i\ne j,\,i,j\notin\mathcal W}\rho_{ij}^{\mathrm{ref}}\le
\tau_\rho^{\mathrm{ref}}\).  These checks prevent end-to-end tuning from improving
sensor fit by making source--basis fingerprints less distinguishable.  If either
response has fewer than \(J\) numerically nonzero singular values, its
\(\sigma_J\) is taken to be zero
(Section~\ref{sec:identifiability_theory}), so refinement cannot be
accepted by exploiting a rank-deficient response.

\noindent\textbf{Reporting protocol.}
The source ambiguity graph, its deterministic connected components, and the
transitive over-merging of an \(A\)--\(B\)--\(C\) chain are defined in
Section~\ref{sec:identifiability_theory}.  Operationally, those components are the
conservative recommended report groups, each edge retains its triggering
coefficient pair, and rank deficiency without an eligible pair yields a global
unresolved warning rather than an invented edge.  For each reported source group
\(G_a\), IASA reports the activity trajectory
\(\widehat{\boldsymbol\theta}_{G_a,1:T}\), its interval
\(\mathrm{CI}(\widehat\theta_{G_a})\), member visibilities
\(\{v_{kb}:k\in G_a\}\), and the maximum eligible cross-group coherence
\(\max_{k\in G_a,\,k'\notin G_a,\,(k,b),(k',b')\notin\mathcal W}
\rho_{(k,b),(k',b')}\).  A source contribution is marked reliable only when its
constituent reported coefficients are above the visibility threshold, its
uncertainty interval is sufficiently narrow, and its eligible coherence with
other reported groups is below the ambiguity threshold.  Reports retain the basis
names of every weak coefficient and the coefficient pair attaining each
cross-source coherence maximum; coherence and ray distance involving weak
fingerprints are displayed as ``N/A'' and excluded from extrema, while the weak
flags remain visible.  Otherwise IASA labels the source as weakly visible,
ambiguous with another source group, or merged into a coarser identifiable group.

\section{New Delhi Platform Details}
\label{app:platform_details}

This appendix expands the platform construction of
Section~\ref{sec:evaluation}: the source inventories
(Table~\ref{tab:new_delhi_inventories}), the kriged initial-condition baseline,
and the auxiliary simulator.

\begin{table}[t]
\centering
\small
\setlength{\tabcolsep}{4pt}
\caption{New Delhi source groups, spatial proxies, and temporal-basis components.}
\label{tab:new_delhi_inventories}
\begin{tabularx}{\columnwidth}{l
  >{\raggedright\arraybackslash}X
  >{\raggedright\arraybackslash}X}
\toprule
Source group & Spatial proxy & Temporal-basis components \\
\midrule
Brick kilns & regional kiln map & alternating 12-hour blocks \\
Industries & regional industry map & continuous + daytime (07--19) \\
Population density & gridded population & peaks 07:00, 13:00, 19:00 \\
Traffic & road-network map & diurnal slots 00/06/12/18 \\
\bottomrule
\end{tabularx}
\end{table}

\noindent\textbf{Observations, sensors, and inventory preparation.}
The hourly government records \citep{cpcb,cpcb_portal} (columns
\texttt{monitor\_id}, \texttt{timestamp\_round}, \texttt{AT}, \texttt{RH},
\texttt{WD}, \texttt{WS}, \texttt{pm10}, \texttt{pm25}) span 21,960 hourly
timestamps from 1 May 2018 through 31 October 2020 in Indian Standard Time;
approximately 74.9\% of wind-direction, 72.8\% of wind-speed, and 90.5\% of
PM\(_{2.5}\) entries are observed, with a valid-value mask retained alongside
station coordinates and timestamps.  The two Pusa monitors (\texttt{Pusa\_IMD} and
\texttt{Pusa\_DPCC}) are averaged by timestamp into \texttt{Pusa\_averaged} (mean
coordinates, per-variable hourly averaging), giving the final 32-sensor layout;
PM\(_{2.5}\) is never imputed and its flattened valid-value mask defines
\(M_{\mathcal O}\), with the same ordered sensor--time rows retained in \(Y\),
\(H_\Phi^{\mathrm{lag}}\), \(Q\), and metadata.  Brick-kiln and industry proxies
follow the regional emissions inventory \citep{GUTTIKUNDA2013101}, population
density the gridded population product \citep{ColumbiaUniversity2018}, and the
traffic map is derived from road map data \citep{google_maps}.  Every
\(80\times80\) map is cropped to the New Delhi study window (rows \(21{:}61\),
columns \(16{:}56\)) covering the regulatory network, giving a \(40\times40\) map,
then divided by its own cropped 99th percentile.  Traffic is modeled as a single
road-network source whose nearest-slot maps at 00:00, 06:00, 12:00, and 18:00
define traffic-specific temporal-basis components under the shared dictionary
\(\Phi\) and the fixed-zero mask \(\mathcal F_0\) (assuming a time-invariant
congestion spatial pattern whose amplitude varies); brick-kiln activity uses
deterministic alternating 12-hour blocks, industry a continuous operating fraction
with higher daytime (07:00--19:00) activity, and population a baseline with smooth
cooking peaks.  These are deterministic study assumptions, not observed emissions
truth, and each map's independent proxy normalization is why fitted coefficients
and contributions are in normalized proxy units rather than physical emission
totals or shares.

\noindent\textbf{Forward-model baselines and study modes.}
Both forward paths use the New Delhi-derived \(40\times40\) domain and regulatory
sensor coordinates, with the open-boundary Gaussian puff operator evaluated
directly at sensors.  The same operator is the matched generator for controlled
recovery experiments and the inference operator for observed PM\(_{2.5}\); only the
initial-condition baseline differs by mode: a zero-source, zero-initial-state
baseline for controlled operator-matched runs, and a subtracted zero-source
transported kriged-initial-condition baseline for observed runs.  Normal
backgrounds follow Section~\ref{subsec:background_construction}, with the base
controlled background the rank-four member.  Where a study is explicitly about
inventory geometry (the coherence and inventory-robustness experiments) the real
New Delhi inventory maps are used; the remaining controlled experiments use compact
synthetic source cells for tractability, since their conclusions concern
conditioning, coherence, and recovery rather than specific inventory footprints.

\noindent\textbf{Kriged initial-condition baseline.}
For observed New Delhi runs, the initial pollution field is estimated from the
first available regulatory observations (the two Pusa monitors averaged) by a
normalized Gaussian-kernel spatial interpolation on the response grid, a kriging
surrogate that avoids an additional heavyweight geostatistics dependency.  Its
zero-source transported trajectory, obtained by propagating this initial field as
a single release through the same open-boundary puff operator (its Gaussian kernel
is the advection--diffusion Green's function, so units are preserved), is
subtracted from the sensor observations before fitting source activities.  Source
coefficients are therefore fitted to the residual sensor-time signal after
removing the contribution of the initialized background field and the low-rank
temporal background basis.  Kernel interpolation from sparse first-hour
observations is itself an ill-posed spatial estimate, so this baseline injects
some uncertainty; the effect is limited by declaring and subtracting the baseline
before fitting and by the low-rank background basis \(Q\), which absorbs residual
broad structure.  Controlled operator-matched experiments may instead use a zero
initial field when the synthetic data are generated from the same
zero-initial-state response.

\noindent\textbf{Auxiliary advection--diffusion simulator.}
The auxiliary advection--diffusion simulator uses first-order upwind advection, a
five-point Laplacian, a two-stage Heun update, diffusivity \(3\times10^{-4}\), and
edge-hold boundaries.  It provides a structural forward-model mismatch: data
generated by this distinct transport operator are fit with the open-boundary puff
response, testing the puff approximation against a different operator family.
This mismatch also drives the residual-adequacy test of
Section~\ref{sec:identifiability_aware_method}, letting us check its power against structural
transport misspecification and not only against missing sources.

\section{Expanded Results}
\label{app:expanded_results}

Full numeric tables backing the Section~\ref{sec:evaluation} figures: the seven
controlled diagnostics and the baseline comparison of Figure~\ref{fig:controlled},
and the observed New Delhi weeks of Figure~\ref{fig:observed}.

\subsection{Conditioning (Exp 1)}
Coefficient recovery versus conditioning and noise; the better-conditioned geometry
recovers more accurately at matched noise.
\begin{table}[t]
\centering
\footnotesize
\caption{Experiment 1: coefficient recovery versus conditioning and noise. The
better-conditioned geometry recovers more accurately at matched noise.}
\label{tab:results_h1}
\begin{tabular}{llrrrr}
\toprule
geom. & noise & $\sigma_J$ & $\kappa$ & coef.\ err. & resid. \\
\midrule
separated & 0.00 & 3.27 & 8.89 & 0.000 & 0.00 \\
separated & 0.05 & 3.27 & 8.89 & 0.066 & 6.37 \\
separated & 0.10 & 3.27 & 8.89 & 0.132 & 12.73 \\
separated & 0.20 & 3.27 & 8.89 & 0.264 & 25.46 \\
close     & 0.00 & 9.60 & 2.26 & 0.000 & 0.00 \\
close     & 0.05 & 9.60 & 2.26 & 0.028 & 8.42 \\
close     & 0.10 & 9.60 & 2.26 & 0.057 & 16.84 \\
close     & 0.20 & 9.60 & 2.26 & 0.113 & 33.68 \\
\bottomrule
\end{tabular}
\end{table}

\subsection{Wind Diversity and Sensor Geometry (Exp 4)}
$\sigma_J$, maximum eligible coherence, and coefficient error across wind provider
$\times$ sensor layout, and the wind-window ensemble spread.
\begin{table}[t]
\centering
\footnotesize
\caption{Experiment 4: $\sigma_J$, maximum eligible coherence, and coefficient
error across wind provider $\times$ sensor layout.}
\label{tab:results_h4}
\begin{tabular}{llrrr}
\toprule
wind & layout & $\sigma_J$ & max.\ coh. & coef.\ err. \\
\midrule
constant & regulatory & 1.53 & 0.975 & 0.000 \\
constant & random     & 0.00 & 0.028 & 0.000 \\
constant & downwind   & 0.13 & 0.990 & 0.000 \\
single   & regulatory & 6.44 & 0.741 & 0.000 \\
single   & random     & 0.00 & 1.000 & 0.573 \\
single   & downwind   & 1.36 & 0.990 & 0.000 \\
diurnal  & regulatory & 6.11 & 0.722 & 0.000 \\
diurnal  & random     & 0.00 & 0.880 & 0.001 \\
ar1      & regulatory & 7.35 & 0.641 & 0.000 \\
ar1      & random     & 0.00 & 0.774 & 0.115 \\
multi    & regulatory & 5.70 & 0.129 & 0.000 \\
multi    & random     & 2.79 & 0.322 & 0.000 \\
real     & regulatory & 2.85 & 0.156 & 0.000 \\
real     & random     & 2.10 & 0.287 & 0.000 \\
real     & downwind   & 0.38 & 0.221 & 0.000 \\
\bottomrule
\end{tabular}
\end{table}
\begin{table}[t]
\centering
\footnotesize
\caption{Experiment 4: wind-window ensemble distribution ($\sigma_J$ percentiles;
both ensembles reach full rank with probability 1 and show no ambiguity).}
\label{tab:results_h4_ens}
\begin{tabular}{lrrr}
\toprule
ensemble & $\sigma_J^{5}$ & $\sigma_J^{50}$ & $\sigma_J^{95}$ \\
\midrule
historical (real slices) & 1.45 & 4.51 & 9.96 \\
simulated (AR(1))        & 3.50 & 5.58 & 6.80 \\
\bottomrule
\end{tabular}
\end{table}

\subsection{Background Stress (Exp 3)}
Redundant $\equiv$ primary (same span); the source-like stress basis collapses
identifiability ($\sigma_J\!\to\!0$, absorption $\to\!1$).
\begin{table}[t]
\centering
\footnotesize
\caption{Experiment 3: background stress. Redundant $\equiv$ primary (same span);
the source-like stress basis collapses identifiability ($\sigma_J\!\to\!0$,
absorption $\to\!1$).}
\label{tab:results_h3}
\begin{tabular}{lrrrr}
\toprule
background & min.\ vis. & absorp. & $\sigma_J$ & coef.\ err. \\
\midrule
none      & 1.000 & 0.000 & 0.080 & 2.36 \\
primary   & 0.972 & 0.235 & 0.080 & 2.61 \\
redundant & 0.972 & 0.235 & 0.080 & 2.61 \\
stress    & 0.000 & 1.000 & 0.000 & 0.78 \\
\bottomrule
\end{tabular}
\end{table}

\subsection{Transport Error (Exp 5)}
\label{subsec:results_h5a}
Parametric perturbations (wind speed, direction, dispersion); a structural PDE
mismatch generated with the auxiliary edge-hold advection--diffusion solver yields
adequacy rejection rate $1.0$.
\begin{table}[t]
\centering
\footnotesize
\caption{Experiment 5: parametric transport error. Coefficient error rises with
the operator error norm on the direction and dispersion axes (the speed axis is
non-monotonic). Structural PDE mismatch: adequacy rejection rate $1.0$.}
\label{tab:results_h5a}
\begin{tabular}{llrrr}
\toprule
axis & value & op.\ err. & $\sigma_J$ & coef.\ err. \\
\midrule
direction ($^\circ$) & 0  & 0.00 & 0.87 & 0.00 \\
direction ($^\circ$) & 5  & 0.29 & 3.37 & 0.62 \\
direction ($^\circ$) & 10 & 0.60 & 7.07 & 0.80 \\
direction ($^\circ$) & 20 & 0.87 & 7.98 & 0.86 \\
speed ($\times$)     & 1.25 & 0.29 & 2.09 & 0.45 \\
speed ($\times$)     & 1.50 & 0.53 & 4.57 & 0.27 \\
dispersion ($\times$)& 1.50 & 0.19 & 1.42 & 0.50 \\
dispersion ($\times$)& 2.00 & 0.35 & 1.76 & 0.89 \\
\bottomrule
\end{tabular}
\end{table}

\subsection{Inventory Robustness (Exp 6)}
\label{subsec:results_h5b}
Location, scale, category, and map-version perturbations, each a separate scenario
never pooled with transport error; recovery exact, $\sigma_J$ tracks the version.
\begin{table}[t]
\centering
\footnotesize
\caption{Experiment 6: inventory robustness on the real maps. Recovery is exact in
every scenario; $\sigma_J$ tracks the inventory version.}
\label{tab:results_h5b}
\begin{tabular}{lrr}
\toprule
scenario & $\sigma_J$ & coef.\ err. \\
\midrule
baseline        & 21.56 & 0.0 \\
location shift   & 28.85 & 0.0 \\
spatial rescale  & 47.00 & 0.0 \\
alt.\ map version & 46.01 & 0.0 \\
category swap     & 21.56 & 0.0 \\
\bottomrule
\end{tabular}
\end{table}

\subsection{Lag-Window Selection (Exp 7)}
\label{subsec:results_lag_selection}
Conditioning improves with lag; the selected $L=16$ is chosen from
Equation~\ref{eq:lag_convergence} alone, not fitted coefficients.
\begin{table}[t]
\centering
\footnotesize
\caption{Experiment 7: lag-window selection. Conditioning improves with lag; the
selected lag ($L=16$) is chosen from geometry alone, not fitted coefficients.}
\label{tab:results_lag}
\begin{tabular}{rrrrr}
\toprule
lag $L$ & $\sigma_J$ & $\kappa$ & \#comp. & coef.\ err. \\
\midrule
4  & 0.08 & 569.2 & 2 & 0.0 \\
6  & 0.71 & 63.8  & 2 & 0.0 \\
8  & 2.46 & 18.6  & 2 & 0.0 \\
10 & 4.57 & 10.1  & 2 & 0.0 \\
12 & 6.16 & 7.5   & 2 & 0.0 \\
16 & 9.07 & 5.1   & 2 & 0.0 \\
\bottomrule
\end{tabular}
\end{table}

\subsection{Temporal-Basis Recovery (Exp 9)}
\label{subsec:results_temporal_basis}
Activity-trajectory error stays below coefficient error at every noise level (fixed
$\sigma_J$), a fixed geometric property of the temporal-basis map.
\begin{table}[t]
\centering
\footnotesize
\caption{Experiment 9: temporal-basis recovery (bases: diurnal, block, day/night).
Activity-trajectory error stays below coefficient error at every noise level.}
\label{tab:results_temporal}
\begin{tabular}{rrr}
\toprule
noise & coef.\ err. & activity err. \\
\midrule
0.00 & 0.000 & 0.000 \\
0.02 & 0.132 & 0.075 \\
0.05 & 0.329 & 0.188 \\
0.10 & 0.659 & 0.375 \\
0.20 & 0.900 & 0.582 \\
\bottomrule
\end{tabular}
\end{table}

\subsection{Baseline Comparison (Exp 11)}
IASA against plain NNLS (identifiability layer ablated), chemical mass balance
(unprojected), and a PMF/NMF receptor factorization, on the wind/geometry collapse
and source-like background stress. IASA flags both and, under stress, recovers
apportionment to $<\!0.01$ share error; no baseline flags either case.
\begin{table}[t]
\centering
\footnotesize
\caption{Experiment 11: IASA vs.\ identifiability-blind baselines on two
non-identifiable scenarios (seed 0). Apportionment share error and whether an
identifiability flag was raised. IASA flags in both scenarios; no baseline does.}
\label{tab:results_baselines}
\begin{tabular}{lllr}
\toprule
scenario & method & flag & share err. \\
\midrule
collapse    & IASA       & yes & 0.000 \\
collapse    & plain NNLS & no  & 0.000 \\
collapse    & CMB        & no  & 1.414 \\
collapse    & PMF        & no  & 0.085 \\
\midrule
bg.\ stress & IASA       & yes & 0.006 \\
bg.\ stress & plain NNLS & no  & 1.231 \\
bg.\ stress & CMB        & no  & 0.189 \\
bg.\ stress & PMF        & no  & 1.152 \\
\bottomrule
\end{tabular}
\end{table}

\subsection{Observed New Delhi (weeks 1--4)}
Per-week identifiability geometry, proxy apportionment (fraction of fitted sensor
signal, not physical emissions), and sensor-fit/residual diagnostics (PM$_{2.5}$
never imputed; adequacy uncalibrated).
\begin{table}[t]
\centering
\footnotesize
\caption{Observed New Delhi identifiability, weeks 1--4. All weeks are full rank,
finite-conditioned, with singleton report groups.}
\label{tab:results_nd_ident}
\begin{tabular}{lrrrrl}
\toprule
week & $\sigma_1$ & $\sigma_J$ & num.\ rank & max.\ coh. & groups \\
\midrule
1 & 50.1 & 3.71  & 7 & 0.316 & 4 singletons \\
2 & 60.3 & 5.67  & 7 & 0.355 & 4 singletons \\
3 & 77.0 & 10.15 & 7 & 0.496 & 4 singletons \\
4 & 81.2 & 7.65  & 7 & 0.475 & 4 singletons \\
\bottomrule
\end{tabular}
\end{table}
\begin{table}[t]
\centering
\footnotesize
\caption{Observed New Delhi proxy apportionment, weeks 1--4 (fraction of fitted
sensor signal, not physical emissions). Note the large week-to-week swing.}
\label{tab:results_nd_appt}
\begin{tabular}{lrrrr}
\toprule
week & brick kilns & industries & population & traffic \\
\midrule
1 & 0.00 & 0.00 & 1.00 & 0.00 \\
2 & 0.00 & 0.16 & 0.81 & 0.04 \\
3 & 0.06 & 0.00 & 0.94 & 0.00 \\
4 & 0.93 & 0.07 & 0.00 & 0.00 \\
\bottomrule
\end{tabular}
\end{table}
\begin{table}[t]
\centering
\footnotesize
\caption{Observed New Delhi sensor fit and residual diagnostics, weeks 1--4.
PM$_{2.5}$ never imputed; adequacy uncalibrated (no verdict).}
\label{tab:results_nd_resid}
\begin{tabular}{lrrrl}
\toprule
week & obs.\ rows & mask frac. & resid.\ norm & adequacy \\
\midrule
1 & 4561 & 0.848 & 3768 & uncalibrated \\
2 & 4278 & 0.796 & 5955 & uncalibrated \\
3 & 4315 & 0.803 & 3050 & uncalibrated \\
4 & 4236 & 0.788 & 3760 & uncalibrated \\
\bottomrule
\end{tabular}
\end{table}

\section{Additional Experiments}
\label{app:additional_experiments}

Three further controlled experiments not shown in the main-text figures.

\subsection{Coherence and Grouped Reporting (Exp 2)}
\label{subsec:results_h2}
We form increasingly coherent source pairs by shifting a copy of one inventory map by
a controlled spatial offset, so decreasing the shift drives coherence toward one.
Shrinking the offset drives maximum eligible coherence from $0.05$ to $0.96$ while
$\sigma_J$ falls from $21.8$ to $4.6$, yet at every offset the conservative rule kept
the two sources in separate singleton components with exact recovery (individual and
grouped relative error $0$; Table~\ref{tab:results_h2}): high coherence alone did not
force a merge here. The reporting layer retains \emph{every} trigger edge without
deduplication when a merge does fire (an A--B--C chain keeps both edges).
\begin{table}[t]
\centering
\footnotesize
\caption{Experiment 2: coherence rises as the source offset shrinks, but the
conservative rule merged no sources and recovery stayed exact.}
\label{tab:results_h2}
\begin{tabular}{rrrll}
\toprule
offset & max.\ coh. & $\sigma_J$ & merged & indiv./grp.\ err. \\
\midrule
8.0 & 0.048 & 21.77 & no & 0.0 / 0.0 \\
6.0 & 0.862 & 10.48 & no & 0.0 / 0.0 \\
4.0 & 0.905 & 8.40  & no & 0.0 / 0.0 \\
2.0 & 0.945 & 5.43  & no & 0.0 / 0.0 \\
1.0 & 0.958 & 4.64  & no & 0.0 / 0.0 \\
\bottomrule
\end{tabular}
\end{table}

\subsection{Missing-Source Adequacy (Exp 8)}
\label{subsec:results_missing_source}
On the real New Delhi platform with a non-empty rank-4 primary background, the
refitted parametric bootstrap ($\alpha=0.05$, $1000$ replicates) is well-calibrated
(null rejection rate $0.05$) and has full power against a residual-visible omitted
source whose signature lies largely outside
$\operatorname{span}[H_\Phi^{\mathrm{lag}},Q]$ (rejection $1.0$); an \emph{aligned}
omission, a nonnegative combination of the fitted columns and background, is absorbed
and not detected (rejection $0.05$; Table~\ref{tab:results_missing}). The test is
therefore one-sided: a rejection is evidence of model inadequacy, whereas a
non-rejection cannot certify inventory completeness.
\begin{table}[t]
\centering
\footnotesize
\caption{Experiment 8: missing-source adequacy on the platform (non-empty $Q$).
Calibrated null, full power on a residual-visible omission, and no detection of an
absorbed in-span omission.}
\label{tab:results_missing}
\begin{tabular}{lr}
\toprule
omission case & rejection rate \\
\midrule
none (null, calibration) & 0.05 \\
residual-visible (out of span, frac.\ 0.91) & 1.00 \\
aligned (in span, absorbed) & 0.05 \\
\bottomrule
\end{tabular}
\end{table}

\subsection{Per-Sensor Footprints (Exp 10)}
\label{subsec:results_footprints}
On controlled trials the per-sensor footprints of
Section~\ref{sec:identifiability_theory} localize the responsible upwind cells (mass
centroid $0.74$ cells from the true origin, $100\%$ of mass within the $4$-cell
radius, nonnegative) and per-sensor contributions sum to the fitted sensor signal to
machine precision; because a non-singleton component is present, footprints are
reported at group granularity. On observed New Delhi data footprints are reported per
monitor at the identifiable report groups, accompanying the apportionment of
Section~\ref{subsec:results_new_delhi_apportionment}
(Appendix~\ref{app:footprint_details}).

\section{Extended Discussion on Related Work}
\label{app:extended_related_work}

This appendix expands the four strands summarized in
Section~\ref{sec:related_work}.

\noindent\textbf{Air-pollution source apportionment.}
Source apportionment has a long history in air-quality science, with
receptor-oriented methods estimating source contributions from ambient chemical
or physical measurements and source-oriented methods propagating emissions
inventories through atmospheric transport models. Classical receptor frameworks
include chemical mass balance and factor-analytic approaches such as positive
matrix factorization \citep{watson2002receptor,paatero1994positive,hopke2016review}.
Reviews and harmonization efforts have clarified best practices for receptor
models, source-oriented models, and their combined use, while also documenting
practical sensitivity to source profiles, factor interpretation, chemical
resolution, and model comparability
\citep{viana2008source,belis2013critical,belis2019european,mircea2020european,hopke2020global}.
These sensitivities arise because ambient measurements provide only indirect
mixtures of source contributions: source profiles may be incomplete or locally
unavailable, tracers and proxy inventories can be collinear, factor
decompositions may admit multiple plausible interpretations, and transport or
inventory assumptions can map distinct source configurations to similar
observations. This ambiguity is especially visible in settings such as India,
where studies can produce divergent source-contribution estimates for the same
city because of limited local profiles, collinear tracers, and methodological
differences \citep{pant2012critical}; global reviews further show that
policy-relevant source categories such as traffic, industry, dust, and domestic
combustion are routinely reported from heterogeneous apportionment studies
\citep{karagulian2015contributions}. Our work is complementary to this
literature: rather than proposing another receptor model or emissions inventory,
we analyze when known or proxy source maps are distinguishable from sparse
concentration sensors after wind-conditioned transport, finite lags, background
correction, and noise.

\noindent\textbf{Inverse modeling and physics-constrained learning.}
Estimating latent states or parameters from partial observations is central to
inverse problems and data assimilation \citep{tarantola2005inverse,engl1996regularization}.
Classical approaches such as Kalman filtering, ensemble Kalman filters, and
variational data assimilation use dynamical models together with priors,
covariance assumptions, and observation models to infer hidden quantities from
measurements \citep{wang2023kalman,carrassi2018data}. More recent physics-informed
and operator-learning methods, including PINNs and neural operators, incorporate
governing equations or physics-motivated structure into flexible learning systems
for parameter estimation and system identification
\citep{raissi2019physics,li2024physics,li2020fourier,bhardwaj2025fieldformer}.
These methods motivate our use of a transport-structured, PyTorch-based
constrained fitting backend, but they do not by themselves resolve the
identifiability question. In sparse sensing regimes, the map from latent source
activities, transport parameters, and background components to sensor
observations can be many-to-one: different parameterizations may fit the same
measurements after optimization. Our focus is therefore complementary to
calibration and physics-constrained learning methods. Rather than asking only how
to fit the projected sensor data, we characterize when the fitted source--basis
coefficients are identifiable from the projected lagged response matrix and when
the result should be weakened, qualified, or grouped.

\noindent\textbf{Sensor-space spatio-temporal learning.}
Graph neural networks, transformers, and sequence models have been widely used to
model spatio-temporal processes from sensor data, including traffic,
environmental, and air-quality time series
\citep{li2017diffusion,yu2017spatio,wu2019graph,chen2023group,iyer2022modeling,bhardwaj2025comprehensive,bhardwaj2025fieldformer}.
These methods are relevant to our setting because they operate in the same
sparse-sensor observational regime. However, their usual objective is
forecasting, interpolation, or representation learning over sensor measurements
rather than attribution to physically meaningful source inventories. A
sensor-space model can predict observed concentrations well while leaving
unresolved whether distinct source groups would produce distinguishable
wind-conditioned fingerprints at the deployed sensors. Our work is therefore
complementary: instead of learning an unconstrained predictive representation of
the sensor field, we analyze the identifiability of declared source--basis
contributions under a specified transport, background, lag, and noise model.

\noindent\textbf{Observability and system identification.}
The question of whether latent quantities can be recovered from observations is
closely related to classical notions of observability in control theory
\citep{chen1984linear}. For linear dynamical systems, observability characterizes
whether the internal state can be uniquely determined from output trajectories.
Extensions to nonlinear systems and PDEs have been studied in system
identification and inverse-problem literature
\citep{ljung1987theory,banks2012estimation,colton1990inverse}, often under
assumptions of sufficient measurement richness. We bring this perspective to
inventory-based air-pollution source apportionment under sparse spatial sensing.
Rather than asking whether a full latent state or physical model is observable, we
ask whether declared source--basis contributions have distinct projected
sensor-time fingerprints after wind-conditioned transport, finite lags, background
correction, and noise. This leads to diagnostics tailored to the apportionment
task, including rank and singular-value stability, source visibility, pairwise
coherence, background absorption, and conservative grouping, together with a
real-world instantiation that reports the attribution resolution supported by the
available sensors, inventories, and transport assumptions.

\section{Implementation Invariants}
\label{app:implementation_invariants}

The response is constructed first on the complete time-major, sensor-minor row
index.  The PM\(_{2.5}\) observation mask then selects the same ordered rows
from \(Y\), \(H_\Phi^{\mathrm{lag}}\), \(Q\), and metadata.  Source--basis
columns are always source-major and basis-minor.  Any row or column metadata
mismatch is a hard error.

The response construction uses integer source-cell centers, half-cell domain
extents, fractional final advection substeps, a positive minimum dispersion
age, direct sensor evaluation, finite lag, and complete removal after center
exit.  Kernel boundary/truncation loss and whole-release exit loss are recorded
separately.  Retained and dropped kernel mass sum to emitted kernel mass within
the recorded quadrature tolerance; sensor observations are never summed as a
mass proxy.  The default puff baseline has zero source and zero initial state.

Normal backgrounds have effective rank at most eight and do not use inventory
or response columns.  Thin-SVD projection stores the numerical tolerance,
singular values, retained \(U_r\), dependent input columns, removed components,
and orthogonality/idempotence residuals.  A source-like column is permitted only
when the run is labeled as a stress test.

The primary \(Q\), lag-candidate grid, and fixed-zero coefficient set are
declared before fitting.  A fixed-zero mask removes columns before diagnostics
and fitting and stores both reduced-to-original and original-to-reduced index
maps.  Fitted near-zero coefficients never modify the mask.  Adjacent lag
responses use identical rows and columns; the default convergence threshold is
\(\tau_L=10^{-3}\).

\section{Sanity Gates}
\label{app:sanity_gates}

Before paper experiments, the implementation must pass six small deterministic
gates built from public APIs:
\begin{enumerate}[leftmargin=*]
    \item \textbf{S1: response.}  On a \(16\times16\) grid with four compact
    sources, four sensors, impulse/constant bases, and seed 123, verify response
    shape, downwind arrival, open-boundary exit, no wraparound, dispersion
    anisotropy, mass accounting, metadata, and exact repeatability.
    \item \textbf{S2: projection.}  Use the S1 response with a rank-four normal
    background to verify exact ordering, background removal, orthogonality,
    idempotence, redundant-column invariance, and the labeled source-like stress
    behavior.
    \item \textbf{S3: diagnostics.}  Verify orthogonal, duplicate, and weak
    matrices, including padded zeros, \(\sigma_J\), infinite deficient
    condition status, weak-pair N/A values, and a matched multi-source eastward
    versus two-direction wind comparison.
    \item \textbf{S4: fit.}  Recover known nonnegative coefficients in
    noiseless/noisy well-conditioned cases, verify projected-FISTA KKT
    convergence, preserve exact nonnegativity, and recover the duplicate-pair
    sum when the individual split is nonunique.
    \item \textbf{S5: report groups.}  Verify the duplicate-source connected
    component, an \(A\)--\(B\)--\(C\) chain with both trigger edges retained,
    separated-source non-edge, independent weak flag, conservative-group flag,
    and summed group contribution.
    \item \textbf{S6: end to end.}  Build sources, wind, response, background,
    masked observations, diagnostics, fit, uncertainty, and report groups with
    one command and write a compact provenance-complete summary.
\end{enumerate}
Task 10 experiments begin only after S1--S6 and the PyTorch CPU/CUDA parity
checks pass.  Two further gates support the results.  \textbf{S7 (adequacy
calibration)} runs a statistically calibrated study of the residual-adequacy test
on a controlled orthonormal design, checking that the null rejection rate matches
\(\alpha\), that the null \(p\)-values are uniform, and that power increases
monotonically with omission strength.  A final \textbf{reporting gate} exercises
the evaluation layer that emits the result tables in Section~\ref{sec:evaluation},
verifying our reporting rules: undefined weak-pair metrics stay null,
grouped metrics accompany any non-singleton report component, an \(A\)--\(B\)--\(C\)
merge chain retains both trigger edges, percentages are labeled fractions of fitted
sensor signal, and an uncalibrated run never reports an adequacy pass.

\section{PyTorch Solver and Uncertainty Protocol}
\label{app:pytorch_solver}

After CSV/NPZ ingestion, numerical computation uses PyTorch tensors on an
explicit device and dtype.  The nonnegative coefficient objective is
\begin{equation}
f(\mathbf c)=
\|\widetilde Y-\widetilde H_\Phi\mathbf c\|_2^2
+\lambda\|\mathbf c-\mathbf c_0\|_2^2,
\qquad \mathbf c\ge0.
\label{eq:appendix_fista_objective}
\end{equation}
Projected FISTA uses a Lipschitz step derived from \(\sigma_1\) or a recorded
power-iteration estimate, clamps every iterate to the nonnegative orthant, and
restarts acceleration whenever the objective increases.  Termination requires
the configured projected-gradient/KKT residual and relative objective-change
tolerances, or reports that the iteration cap was reached.  The fit artifact
stores convergence status, iteration count, final KKT residual, objective
summary, device, and dtype.

Active-set covariance uses PyTorch linear algebra on
\(\widetilde H_{\Phi,\mathcal A}^{\top}
\widetilde H_{\Phi,\mathcal A}+\lambda I\).  Wind-ensemble and bootstrap refits
are batched on the same device when shapes permit.  Undefined weak-pair metrics
are serialized as JSON \texttt{null}, never NaN or a false numerical distance.

Uncertainty artifacts distinguish \texttt{ensemble\_kind=transport} from
\texttt{ensemble\_kind=inventory}.  Only the former may be summarized as an
ensemble uncertainty interval; inventory members remain named robustness
scenarios, and an aggregation request that mixes the two kinds is an error.

\section{Adequacy and Ensemble Result Contracts}
\label{app:adequacy_contracts}

A residual-adequacy result stores \texttt{calibration\_status},
\(T_{\mathrm{res}}\), bootstrap quantile, Monte Carlo \(p\)-value,
\(\alpha\), replicate count, and noise-model provenance.  Paper runs use
\(\alpha=0.05\) and 1000 refitted replicates.  If the externally calibrated
noise covariance is absent or invalid, numerical test fields are null and the
status is \texttt{uncalibrated}; no boolean pass is emitted.  Sensor, time, and
autocorrelation residual summaries are retained in either case.

A lag-selection result stores every candidate \(L\), adjacent convergence
ratio, selected \(L\), \(\tau_L\), physical grid rationale, row count, and
diagnostic/group summaries.  A wind-distribution result stores the sampled-window
provenance; 5th, 50th, and 95th percentiles of \(\sigma_J\); full numerical and
effective-rank probabilities; coefficient weakness probabilities; pairwise
ambiguity probabilities; and conservative-component frequencies.

\section{Reproducibility and Artifact Provenance}
\label{app:reproducibility}

Every paper run records the container image, repository revision, PyTorch and
CUDA versions, device model, dtype, deterministic-algorithm settings, random
seeds, source files and crop, per-source scales, wind-imputer architecture, held-out wind split, query grid, gridded wind product, and ensemble provenance,
inventory identifier/hash/version, observation mask, response and transport
configuration, wind realization, dispersion settings, background columns and
rank tolerance, lag protocol, fixed-zero mask and index mapping, diagnostic
thresholds, ensemble kind, solver settings, noise-model provenance, and parent
artifact hashes.
Generated wind products, response matrices, checkpoints, and experiment outputs
are written to generated-data or log paths and are not treated as immutable
source files.

Controlled-run artifacts contain true and estimated coefficients, reconstructed
activities, response/projection results, diagnostics, uncertainty, report
groups, and recovery metrics.  Observed New Delhi artifacts omit unavailable
ground truth and instead contain observed-row counts, fitted trajectories, raw
and projected residuals, normalized proxy activities/contributions,
uncertainty, weak/ambiguous flags, trigger pairs, and group summaries.  A
percentage is emitted only with an explicit denominator identifying it as a
fraction of fitted inventory-attributed sensor signal.


\begin{thebibliography}{35}
\providecommand{\natexlab}[1]{#1}

\bibitem[{Banks and Kunisch(2012)}]{banks2012estimation}
Banks, H.~T.; and Kunisch, K. 2012.
\newblock \emph{Estimation techniques for distributed parameter systems}.
\newblock Springer Science \& Business Media.

\bibitem[{Beck and Teboulle(2009)}]{beck2009fast}
Beck, A.; and Teboulle, M. 2009.
\newblock A fast iterative shrinkage-thresholding algorithm for linear inverse
  problems.
\newblock \emph{SIAM Journal on Imaging Sciences}, 2(1): 183--202.

\bibitem[{Belis et~al.(2019)Belis, Favez, Mircea, Diapouli, Manousakas,
  Vratolis, Gilardoni, Paglione, Decesari, Mocnik et~al.}]{belis2019european}
Belis, C.; Favez, O.; Mircea, M.; Diapouli, E.; Manousakas, M.; Vratolis, S.;
  Gilardoni, S.; Paglione, M.; Decesari, S.; Mocnik, G.; et~al. 2019.
\newblock European guide on air pollution source apportionment with receptor
  models—Revised version 2019.
\newblock \emph{Publications Office, LU}.

\bibitem[{Belis et~al.(2013)Belis, Karagulian, Larsen, and
  Hopke}]{belis2013critical}
Belis, C.; Karagulian, F.; Larsen, B.~R.; and Hopke, P. 2013.
\newblock Critical review and meta-analysis of ambient particulate matter
  source apportionment using receptor models in Europe.
\newblock \emph{Atmospheric Environment}, 69: 94--108.

\bibitem[{Bhardwaj et~al.(2025)Bhardwaj, Balashankar, Iyer, Soans, Sudarshan,
  Pande, and Subramanian}]{bhardwaj2025comprehensive}
Bhardwaj, A.; Balashankar, A.; Iyer, S.; Soans, N.; Sudarshan, A.; Pande, R.;
  and Subramanian, L. 2025.
\newblock Comprehensive monitoring of air pollution hotspots using sparse
  sensor networks.
\newblock \emph{ACM Journal on Computing and Sustainable Societies}, 3(4):
  1--36.

\bibitem[{Bhardwaj, Balashankar, and
  Subramanian(2025)}]{bhardwaj2025fieldformer}
Bhardwaj, A.; Balashankar, A.; and Subramanian, L. 2025.
\newblock FieldFormer: Physics-Informed Transformers for Spatio-Temporal Field
  Reconstruction from Sparse Sensors.
\newblock \emph{arXiv preprint arXiv:2510.03589}.

\bibitem[{Carrassi et~al.(2018)Carrassi, Bocquet, Bertino, and
  Evensen}]{carrassi2018data}
Carrassi, A.; Bocquet, M.; Bertino, L.; and Evensen, G. 2018.
\newblock Data assimilation in the geosciences: An overview of methods, issues,
  and perspectives.
\newblock \emph{Wiley Interdisciplinary Reviews: Climate Change}, 9(5): e535.

\bibitem[{{Center for International Earth Science Information Network (CIESIN),
  Columbia University}(2018)}]{ColumbiaUniversity2018}
{Center for International Earth Science Information Network (CIESIN), Columbia
  University}. 2018.
\newblock Gridded Population of the World, Version 4 (GPWv4): Population
  Density, Revision 11.

\bibitem[{Chen(1984)}]{chen1984linear}
Chen, C.-T. 1984.
\newblock \emph{Linear system theory and design}, volume 301.
\newblock Holt, Rinehart and Winston New York.

\bibitem[{Chen et~al.(2023)Chen, Xu, Wu, and Huang}]{chen2023group}
Chen, L.; Xu, J.; Wu, B.; and Huang, J. 2023.
\newblock Group-aware graph neural network for nationwide city air quality
  forecasting.
\newblock \emph{ACM Transactions on Knowledge Discovery from Data}, 18(3):
  1--20.

\bibitem[{Colton et~al.(1990)Colton, Ewing, Rundell et~al.}]{colton1990inverse}
Colton, D.~L.; Ewing, R.~E.; Rundell, W.; et~al. 1990.
\newblock \emph{Inverse problems in partial differential equations}, volume~42.
\newblock Siam.

\bibitem[{CPCB(2023)}]{cpcb}
CPCB. 2023.
\newblock Air Quality Data.
\newblock \url{https://cpcb.nic.in/}.

\bibitem[{CPCB(2025)}]{cpcb_portal}
CPCB. 2025.
\newblock CPCB Data Portal.
\newblock
  \url{https://app.cpcbccr.com/ccr/#/caaqm-dashboard-all/caaqm-landing/caaqm-comparison-data}.

\bibitem[{Engl, Hanke, and Neubauer(1996)}]{engl1996regularization}
Engl, H.~W.; Hanke, M.; and Neubauer, A. 1996.
\newblock \emph{Regularization of inverse problems}, volume 375.
\newblock Springer Science \& Business Media.

\bibitem[{{Google}(2024)}]{google_maps}
{Google}. 2024.
\newblock Google Maps.
\newblock \url{https://maps.google.com}.
\newblock Accessed: 2026-03-30.

\bibitem[{Guttikunda and Calori(2013)}]{GUTTIKUNDA2013101}
Guttikunda, S.~K.; and Calori, G. 2013.
\newblock A GIS based emissions inventory at 1 km × 1 km spatial
  resolution for air pollution analysis in Delhi, India.
\newblock \emph{Atmospheric Environment}, 67: 101--111.

\bibitem[{Hopke(2016)}]{hopke2016review}
Hopke, P.~K. 2016.
\newblock Review of receptor modeling methods for source apportionment.
\newblock \emph{Journal of the Air \& Waste Management Association}, 66(3):
  237--259.

\bibitem[{Hopke et~al.(2020)Hopke, Dai, Li, and Feng}]{hopke2020global}
Hopke, P.~K.; Dai, Q.; Li, L.; and Feng, Y. 2020.
\newblock Global review of recent source apportionments for airborne
  particulate matter.
\newblock \emph{Science of The Total Environment}, 740: 140091.

\bibitem[{Iyer et~al.(2022)Iyer, Balashankar, Aeberhard, Bhattacharyya,
  Rusconi, Jose, Soans, Sudarshan, Pande, and Subramanian}]{iyer2022modeling}
Iyer, S.~R.; Balashankar, A.; Aeberhard, W.~H.; Bhattacharyya, S.; Rusconi, G.;
  Jose, L.; Soans, N.; Sudarshan, A.; Pande, R.; and Subramanian, L. 2022.
\newblock Modeling fine-grained spatio-temporal pollution maps with low-cost
  sensors.
\newblock \emph{npj Climate and Atmospheric Science}, 5(1): 76.

\bibitem[{Karagulian et~al.(2015)Karagulian, Belis, Dora, Pr{\"u}ss-Ust{\"u}n,
  Bonjour, Adair-Rohani, and Amann}]{karagulian2015contributions}
Karagulian, F.; Belis, C.~A.; Dora, C. F.~C.; Pr{\"u}ss-Ust{\"u}n, A.~M.;
  Bonjour, S.; Adair-Rohani, H.; and Amann, M. 2015.
\newblock Contributions to cities' ambient particulate matter (PM): A
  systematic review of local source contributions at global level.
\newblock \emph{Atmospheric environment}, 120: 475--483.

\bibitem[{Li et~al.(2017)Li, Yu, Shahabi, and Liu}]{li2017diffusion}
Li, Y.; Yu, R.; Shahabi, C.; and Liu, Y. 2017.
\newblock Diffusion convolutional recurrent neural network: Data-driven traffic
  forecasting.
\newblock \emph{arXiv preprint arXiv:1707.01926}.

\bibitem[{Li et~al.(2020)Li, Kovachki, Azizzadenesheli, Liu, Bhattacharya,
  Stuart, and Anandkumar}]{li2020fourier}
Li, Z.; Kovachki, N.; Azizzadenesheli, K.; Liu, B.; Bhattacharya, K.; Stuart,
  A.; and Anandkumar, A. 2020.
\newblock Fourier neural operator for parametric partial differential
  equations.
\newblock \emph{arXiv preprint arXiv:2010.08895}.

\bibitem[{Li et~al.(2024)Li, Zheng, Kovachki, Jin, Chen, Liu, Azizzadenesheli,
  and Anandkumar}]{li2024physics}
Li, Z.; Zheng, H.; Kovachki, N.; Jin, D.; Chen, H.; Liu, B.; Azizzadenesheli,
  K.; and Anandkumar, A. 2024.
\newblock Physics-informed neural operator for learning partial differential
  equations.
\newblock \emph{ACM/IMS Journal of Data Science}, 1(3): 1--27.

\bibitem[{Ljung(1987)}]{ljung1987theory}
Ljung, L. 1987.
\newblock \emph{System Identification: Theory for the User}.
\newblock Prentice Hall.

\bibitem[{Mircea et~al.(2020)Mircea, Calori, Pirovano, Belis
  et~al.}]{mircea2020european}
Mircea, M.; Calori, G.; Pirovano, G.; Belis, C.; et~al. 2020.
\newblock European guide on air pollution source apportionment for particulate
  matter with source oriented models and their combined use with receptor
  models.
\newblock \emph{Publications Office of the European Union, LU}.

\bibitem[{Paatero and Tapper(1994)}]{paatero1994positive}
Paatero, P.; and Tapper, U. 1994.
\newblock Positive matrix factorization: A non-negative factor model with
  optimal utilization of error estimates of data values.
\newblock \emph{Environmetrics}, 5(2): 111--126.

\bibitem[{Pant and Harrison(2012)}]{pant2012critical}
Pant, P.; and Harrison, R.~M. 2012.
\newblock Critical review of receptor modelling for particulate matter: a case
  study of India.
\newblock \emph{Atmospheric Environment}, 49: 1--12.

\bibitem[{Raissi, Perdikaris, and Karniadakis(2019)}]{raissi2019physics}
Raissi, M.; Perdikaris, P.; and Karniadakis, G.~E. 2019.
\newblock Physics-informed neural networks: A deep learning framework for
  solving forward and inverse problems involving nonlinear partial differential
  equations.
\newblock \emph{Journal of Computational physics}, 378: 686--707.

\bibitem[{Seber and Lee(2003)}]{seber2003linear}
Seber, G. A.~F.; and Lee, A.~J. 2003.
\newblock \emph{Linear Regression Analysis}.
\newblock Hoboken, NJ: Wiley, 2nd edition.

\bibitem[{Tarantola(2005)}]{tarantola2005inverse}
Tarantola, A. 2005.
\newblock \emph{Inverse problem theory and methods for model parameter
  estimation}.
\newblock SIAM.

\bibitem[{Viana et~al.(2008)Viana, Kuhlbusch, Querol, Alastuey, Harrison,
  Hopke, Winiwarter, Vallius, Szidat, Pr{\'e}v{\^o}t et~al.}]{viana2008source}
Viana, M.; Kuhlbusch, T.~A.; Querol, X.; Alastuey, A.; Harrison, R.~M.; Hopke,
  P.~K.; Winiwarter, W.; Vallius, M.; Szidat, S.; Pr{\'e}v{\^o}t, A.~S.; et~al.
  2008.
\newblock Source apportionment of particulate matter in Europe: a review of
  methods and results.
\newblock \emph{Journal of aerosol science}, 39(10): 827--849.

\bibitem[{Wang et~al.(2023)Wang, Sun, Jiang, Zeng, and Liu}]{wang2023kalman}
Wang, B.; Sun, Z.; Jiang, X.; Zeng, J.; and Liu, R. 2023.
\newblock Kalman filter and its application in data assimilation.
\newblock \emph{Atmosphere}, 14(8): 1319.

\bibitem[{Watson et~al.(2002)Watson, Zhu, Chow, Engelbrecht, Fujita, and
  Wilson}]{watson2002receptor}
Watson, J.~G.; Zhu, T.; Chow, J.~C.; Engelbrecht, J.; Fujita, E.~M.; and
  Wilson, W.~E. 2002.
\newblock Receptor modeling application framework for particle source
  apportionment.
\newblock \emph{Chemosphere}, 49(9): 1093--1136.

\bibitem[{Wu et~al.(2019)Wu, Pan, Long, Jiang, and Zhang}]{wu2019graph}
Wu, Z.; Pan, S.; Long, G.; Jiang, J.; and Zhang, C. 2019.
\newblock Graph wavenet for deep spatial-temporal graph modeling.
\newblock \emph{arXiv preprint arXiv:1906.00121}.

\bibitem[{Yu, Yin, and Zhu(2017)}]{yu2017spatio}
Yu, B.; Yin, H.; and Zhu, Z. 2017.
\newblock Spatio-temporal graph convolutional networks: A deep learning
  framework for traffic forecasting.
\newblock \emph{arXiv preprint arXiv:1709.04875}.

\end{thebibliography}
\end{document}